\newtheorem{definition}{Definition}
\newtheorem{proposition}{Proposition}
\newtheorem{theorem}{Theorem}
\newtheorem{lemma}{Lemma}
\long\def\symbolfootnote[#1]#2{\begingroup%
	\def\thefootnote{\fnsymbol{footnote}}\footnote[#1]{#2}\endgroup} 
\begin{document}
\title{On the Adversarial Robustness of Subspace Learning}

\author{\IEEEauthorblockN{Fuwei Li, Lifeng Lai, and Shuguang Cui}\\
}
\maketitle
\symbolfootnote[0]{
	F. Li, L. Lai and S. Cui are with the Department of Electrical and Computer Engineering, University of California, Davis, CA. Email:\{fli,lflai,sgcui\}@ucdavis.edu.
	The work of F. Li and S. Cui was supported by the National Science Foundation with Grants CNS-1824553, DMS-1622433, AST-1547436, and ECCS-1659025.
	The work of L. Lai was supported by the National Science Foundation under Grants CCF-1717943, CNS-1824553 and CCF-1908258. This paper was presented in part at IEEE International Conference on Acoustics, Speech, and Signal Processing, Brighton, UK, May. 2019\cite{li2019adversarial}.
}
\pagestyle{plain}
\begin{abstract}
In this paper, we study the adversarial robustness of subspace learning problems. Different from the assumptions made in existing work on robust subspace learning where data samples are contaminated by gross sparse outliers or small dense noises, we consider a more powerful adversary who can first observe the data matrix and then intentionally modify the whole data matrix.
We first characterize the optimal rank-one attack strategy that maximizes the subspace distance between the subspace learned from the original data matrix and that learned from the modified data matrix. We then generalize the study to the scenario without the rank constraint and characterize the corresponding optimal attack strategy. Our analysis shows that the optimal strategies depend on the singular values of the original data matrix and the adversary's energy budget.  
Finally, we provide numerical experiments and practical applications to demonstrate the efficiency of the attack strategies. 
\end{abstract}

\begin{IEEEkeywords}
Subspace learning, principal component analysis, adversarial robustness, non-convex optimization.
\end{IEEEkeywords}

\section{Introduction}
Subspace learning has a wide range of applications, such as surveillance video analysis, recommendation system, anomaly detection, etc\cite{Li17compressvideo,xin11subspacedoa,shen17onlinebigdata,guo2014online,otazo2015low,koren2009matrix,mardani2012dynamic,guo2016video}. Among a large variety of subspace learning algorithms, principal component analysis (PCA) is one of the most widely used algorithms. In this paper, we will use PCA as the subspace learning algorithm. PCA computes a small number of principal components, which are orthogonal to each other and represent the majority of the variability of the data samples, and treats the span of these principal components as the desired low-dimensional subspace. Furthermore, many works have proposed robust PCA that can mitigate the impact of certain percentages of outliers and small dense random noise\cite{candes2011robust,hsu2011robust,qiu2014recursive,chen2011robust}.



In this paper, we investigate the adversarial robustness of subspace learning algorithms. Particularly, we examine the robustness of subspace learning algorithms against not only random noise or unintentional corrupted data as considered in existing works, but also malicious data produced by powerful adversaries who can modify the whole data set. Our study is motivated by the fact that subspace learning and many other machine learning algorithms are increasingly being used in safety critical and security related applications, such as autonomous vehicle system\cite{eykholt2017robust}, voice recognition\cite{carlini2016hidden}, medical image processing\cite{finlayson2019adversarial}, etc. In these applications, there might exist powerful adversaries who can modify the data with the goal of maneuvering the machine learning algorithms to make the wrong decision or leave backdoor in the system\cite{chen2017targeted}. To ensure the security and safety of these systems, it is important to understand the impact of these adversarial attacks on the performance of machine learning algorithms.   

In our problem, given the original data matrix, we learn a low-dimensional subspace via PCA. However, there is an adversary who can observe the whole data matrix and then carefully design a modification matrix to change the original data. The goal of the adversary is to modify the original data so as to maximize the subspace distance between the subspace learned from the original data and that learned from the modified data. In this paper, we use Asimov distance\cite{golub2012matrix}, defined as the largest principal angle between two subspaces, to measure the subspace distance. Asimov distance has a close relationship with the chordal 2-norm distance and the Finsler distance, which are used in the analysis of optimization on Manifolds~\cite{edelman1998geometry,weinstein2000almost}. Additionally, Asimov distance is closely related to the projection 2-norm distance and the gap distance, which are used in the control theory to describe the stability and robustness of a system~\cite{georgiou1990optimal,vinnicombe1993frequency,qui1992feedback}. As the Asimov distance depends on the modification matrix in a complex manner, to characterize the optimal attack strategy that maximizes the Asimov distance, we need to solve a complicated non-convex optimization problem.    

Towards this goal, we first solve the optimization problem with an additional rank-one constraint on the modification matrix. We note that a rank-one modification is already powerful enough to capture many common modifications such as changing one data sample, inserting one adversarial data, deleting one feature, etc. Furthermore, the techniques and insights obtained from this special case are useful for the general case without the rank-one constraint. For the rank-one attack case, we study two different scenarios depending on whether the dimension of the selected subspace is equal to the rank of the data matrix or not. Our study reveals that the optimal attack strategy depends on the energy budget and the singular values of the data matrix. Specifically, in the scenario where the dimension of the selected subspace is the same as the rank of the data matrix, we show that the optimal rank-one strategy depends solely on the energy budget and the smallest singular value of the data matrix. In the scenario where the dimension of the selected subspace is less than the rank of the original data matrix, the optimal strategy depends not only on the energy budget but also on the $k$th and ($k+1$)th singular values, where $k$ is the dimension of the selected subspace.

Relying on the insights gained from the rank-one case, we then extend our study to the more general case where no rank constraint is imposed. Compared with the case with the rank-one constraint, the attacker now has more degrees of freedom to modify the data, which makes the characterization of the optimal attack strategy significantly more challenging. To solve this optimization problem, we first prove that, under the basis of the principal components of the original data matrix, the optimal attack matrix has only few non-zero entries at particular locations. 
This result greatly reduces the complexity of our problem. With the help of this result, we then simplify our problem to an optimization problem with the objective function being ratio of two quadratic functions. To solve this non-convex problem, 
we further convert our optimization problem to a feasibility problem and find the  close-form solution to this problem. Our result shows that the optimal strategy depends on the energy budget and the $k$th and ($k+1$)th singular values of the data matrix. Our analysis shows that, compared with the optimal rank-one strategy, this strategy leads to a larger subspace distance. 



Our study is related to the recent works on adversary machine learning. For example, \cite{jagielski2018manipulating} studies how to change the data to manipulate the result of the regression learning system. \cite{bayraktar2019adversarial} investigates the optimal modification strategy to maximize the inference errors in a multivariate estimation system. In an interesting related work \cite{pimentel2017adversarial}, the authors study how to design an adversarial data sample and add it to the data matrix in order to maximize the Asimov distance between the subspace estimated by PCA from the contaminated data matrix and that from the original data matrix. \cite{pimentel2017adversarial} focuses on the case where the original data matrix is low-rank and the dimension of the selected subspace is equal to the rank of the data matrix. By contrast, we consider a more powerful adversarial setting, where the data matrix is not constrained to being low-rank, the dimension of the selected subspace does not necessarily equal the rank of the data matrix, and the adversary can modify the whole data matrix instead of only adding one data sample. 

The remainder of the paper is organized as follows. In Section~\ref{sec:formulation}, we describe the precise problem formulation. In Section~\ref{sec:rankone}, we investigate the optimal rank-one attack strategy. We generalize our results to the case without the rank constraint in Section~\ref{sec:general}. In Section~\ref{sec:num}, we provide numerical experiments with both synthesized data and real data to illustrate results obtained in this paper. Finally, we offer concluding remarks in Section~\ref{sec:con}.

\section{Problem formulation}\label{sec:formulation}
In this section, we introduce the problem formulation. Given a data matrix $\mathbf{X}=[\mathbf x_1,\mathbf x_2,\cdots,\mathbf x_n]$ with each $\mathbf{x}_i\in \mathbb{R}^d$, our goal is to learn a low-dimension subspace via PCA. In the data matrix $\mathbf{X}$, we assume that all the preprocessing steps (such as data centering and standardization) have been done. In this paper, we consider an adversarial setup in which an adversary will first observe $\mathbf{X}$ and then carefully design a modification (attack) matrix $\mathbf{\Delta X}$ to change $\mathbf X$ to $\hat{\mathbf{X}}=\mathbf X+\mathbf{\Delta X}$. We denote function $g_k(\cdot)$ as the PCA operation that computes the $k$ leading principal components. Furthermore, let $\mathbb{X} =\textrm{span}(g_k(\mathbf{X}))$ be a $k$-dimensional subspace learned from $\mathbf X$ and $\hat{\mathbb{X}}=\textrm{span}(g_k(\hat{\mathbf{X}}))$
be a $k$-dimensional subspace learned from the modified data $\hat{\mathbf X}$. The goal of the adversary is to design the modification matrix $\mathbf{\Delta X}$ so as to make the distance between $\mathbb{X}$ and $\hat{\mathbb{X}}$ as large as possible. To measure such a distance, we use the largest principal angle between $\mathbb{X}$ and $\hat{\mathbb{X}}$ as defined below \cite{golub2012matrix}.
The largest principal angle plays an important role in the subspace classification problem\cite{huang2015role}. It is closely related to the projection 2-norm which is widely used in engineering applications\cite{he97detectiongapmetric,golub2012matrix,absil2006largest}.
The projection 2-norm also provides a way to measure the discrepancy of the projections of a vector on two distinct subspaces. It is useful in the robustness analysis of the principal component regression (PCR), as one is actually projecting the response value vector onto the selected feature subspace in PCR. We will provide an example to illustrate it in Section~\ref{sec:num} using real data.

\begin{definition}
	Let $\mathbb{X}$ and $\hat{\mathbb{X}}$ be two $k$-dimensional subspaces in $\mathbb{R}^d$. The principal angles $\{\theta_i\}_{i=1}^k$ are defined recursively:
	\begin{align*}
	\cos(\theta_i) =& \max_{\mathbf u_i\in \mathbb{X},\mathbf v_i\in \hat{\mathbb{X}}} \quad  \mathbf u_i^\top \mathbf v_i \\
	\text{s.t.}
	\quad & \|\mathbf u_i\| = \|\mathbf v_i\| = 1, \\
	&\mathbf u_j^\top \mathbf u_i = \mathbf v_j^\top \mathbf v_i =0,\forall \, j=1,2,\cdots,i-1. 
	\end{align*}
\end{definition}
In this paper, we will use $\|\cdot\|$ to denote the $\ell_2$ norm and  $\theta\big(g_k(\mathbf{X}),g_k(\hat{\mathbf{X}})\big)$ or simply $\theta$ to denote the Asimov distance between the subspace $\mathbb{X}$ estimated from $\mathbf{X}$ and the subspace $\hat{\mathbb{X}}$ estimated from $\hat{\mathbf{X}}$. Given an orthonormal basis $\mathbf U_{\mathbb{X}}$ of $\mathbb{X}$ and an orthonormal basis $\mathbf U_{\hat{\mathbb{X}}}$ of $\hat{\mathbb{X}}$, $\{\cos(\theta_1),\cdots,\cos(\theta_k)\}$ are the singular values of $\mathbf U_{\mathbb{X}}^\top \mathbf U_{\hat{\mathbb{X}}}$~\cite{golub2012matrix}. Hence, the Asimov distance is determined by the smallest singular value of $\mathbf U_{\mathbb{X}}^\top \mathbf U_{\hat{\mathbb{X}}}$. It is easy to see that, if no constraint is imposed on $\mathbf{\Delta X}$, $\hat{\mathbf X}$ can be arbitrary and $\theta$ can be easily made to be $\pi/2$. Therefore, we impose an energy constraint on $\mathbf{\Delta X}$. In particular, we assume that the energy of $\mathbf{\Delta X}$ is less than or equal to $\eta$. In this paper, we use the Frobenius norm $\|\mathbf{\Delta X} \|_{\text{F}}$ to measure the energy. Hence, the goal of this attacker is to solve the following optimization problem:

\begin{align}
	\max_{\mathbf{\Delta X}\in \mathbb{R}^{d\times n}}:\quad & \theta\big(g_k(\mathbf{X}),g_k(\hat{\mathbf{X}})\big)
	\label{opt:generaloriginal}\\\nonumber
	\text{s.t.}\quad  & \hat{\mathbf{X}} = \mathbf{X} + \mathbf{\Delta X}, \\ \nonumber
	\quad & \|\mathbf{\Delta X}\|_{\text{F}} \le \eta. 
\end{align}

Even though \eqref{opt:generaloriginal} is a complicated non-convex optimization problem, we will fully characterize the optimal solution to~\eqref{opt:generaloriginal} for any given $\eta$. This characterization will enable us to investigate the impact of this optimal attack with respect to the energy budget $\eta$.


\section{Optimal rank-one adversarial strategy}\label{sec:rankone}
In this section, we will solve~\eqref{opt:generaloriginal} for the special case where the modification matrix $\mathbf{\Delta X}$ is limited to being rank-one. 
The techniques and insights obtained from this special case will be useful for the general case considered in Section~\ref{sec:general}.

With this additional rank-one constraint, $\mathbf{\Delta X}$ can be written as $ \mathbf a\mathbf b^\top$ for some $\mathbf a\in \mathbb{R}^d$ and $\mathbf b\in \mathbb{R}^n$, and the optimization problem~\eqref{opt:generaloriginal} becomes
\begin{align} 
    \max_{\mathbf a\in \mathbb{R}^d,\mathbf b\in \mathbb{R}^n}:\quad & \theta\big(g_k(\mathbf{X}),g_k(\hat{\mathbf{X}})\big)     \label{opt:original} \\ \nonumber
    \text{s.t.}\quad & \hat{\mathbf X} = \mathbf X + \mathbf{\Delta X},\\ \nonumber
    &\mathbf{\Delta X} = \mathbf a\mathbf b^\top,
    & \|\mathbf{\Delta X} \|_{\text{F}} \le \eta.
\end{align}
It is easy to see that, for any feasible solution $(\tilde{\mathbf{a}},\tilde{\mathbf{b}})$ with $||\tilde{\mathbf{b}}||\neq 1$, we can construct another feasible solution $(||\tilde{\mathbf{b}}||\tilde{\mathbf{a}},\tilde{\mathbf{b}}/||\tilde{\mathbf{b}}||)$ that gives the same objective function value. Hence, without loss of optimality, we will fix the norm of $\mathbf b$ to be $1$ throughout this section. 


Based on the value of $k$, i.e., the dimension of the subspace we select, we will first present the solution to the case when $k=\text{rank}(\mathbf X)$, and then generalize the result to the case when $k<\text{rank}(\mathbf X)$.

\subsection{Case with $k=\text{rank}(\mathbf X)$}\label{sec:kerank}
In this subsection, we consider the case when the dimension of the subspace selected is equal to the rank of the data matrix. In this case, the span of $\mathbf{X}$ equals the span of $g_k(\mathbf{X})$.
Furthermore, we divide this case into two scenarios where the data matrix is full-rank and the data matrix is low-rank. 
\subsubsection{Full-Rank Case}
In the full column rank case, $\text{rank}(\mathbf X)=n$, where $n \leq  d$.  This case arises when the number of samples is limited, for example, at the beginning of online PCA. In this case, the span of $\hat{\mathbf{X}}$ is equal to the span of $g_k(\hat{\mathbf{X}})$,
and hence we can write $\theta\big(g_k(\mathbf{X}),g_k(\hat{\mathbf{X}})\big)$ as $\theta(\mathbf{X},\hat{\mathbf{X}})$. In the following, we first find the expression of $\theta(\mathbf{X},\hat{\mathbf X})$ for any given $\hat{\mathbf X}= \mathbf X +\mathbf{a}\mathbf{b}^T$. Using this expression, we then characterize the optimal attack matrix $\mathbf \Delta \mathbf X$.

Suppose the compact SVD of $\mathbf X$ is 
$ \mathbf X = \mathbf U \mathbf\Sigma \mathbf V^\top =\mathbf U \mathbf W,$
where $\mathbf{\Sigma} = \text{diag}(\sigma_1,\sigma_2,\cdots,\sigma_n)$. 
One set of orthonormal bases for the column space of $\mathbf X$ is $\mathbf U$. We can also use SVD to find a set of orthonormal bases $\tilde{\mathbf U}$ of $\text{span}(\hat{\mathbf X})$. 

Since $\hat{\mathbf X} = \mathbf{X} + \mathbf a\mathbf b^\top $, $\tilde{\mathbf U}$ can be directly expressed as a function of $\mathbf U$ \cite{zimmermann2017closed}:
$$\tilde{\mathbf U} = \mathbf U + (\alpha \mathbf U \mathbf w+\beta \mathbf s)\mathbf w^\top,$$
where 
\begin{align*}
&\mathbf a_{u^\perp}=(\mathbf I - \mathbf U\mathbf U^\top)\mathbf a,
\, &&\mathbf s= \mathbf a_{u^\perp}/\|\mathbf a_{u^\perp}\|, \\
&\tilde{\mathbf w}= -\mathbf W^{-\top}\mathbf b, 
\, &&\mathbf w= \tilde{\mathbf w}/\|\tilde{\mathbf w}\|,\\ 
&\omega = (1-\mathbf a^\top \mathbf U\tilde{\mathbf w})/\|\mathbf a_{u^\perp}\|, 
\, &&\mathbf g= [\tilde{\mathbf w},\omega]^\top, \\
&\alpha = |\omega|/\|\mathbf g\|-1,
\, &&\beta=-\text{sign}(\omega)\|\tilde{\mathbf w}\|/\|\mathbf g\|,
\end{align*}
and $\mathbf W^{-\top}=\mathbf (\mathbf W^{-1})^\top$. Hence, we have 
$\mathbf U^\top \tilde{\mathbf U} 
= \mathbf U^\top\left(\mathbf U+(\alpha \mathbf U \mathbf w+\beta \mathbf s)\mathbf w^\top \right)
=\mathbf I + \alpha \mathbf w\mathbf w^\top. $
The singular values of $\mathbf I + \alpha \mathbf w\mathbf w^\top$ are $\{1,1,\cdots,1+\alpha \mathbf w^\top \mathbf w\}$. Since $\mathbf w^\top \mathbf w=1, 1+\alpha = |\omega|/\|\mathbf g\|$, the smallest singular value of $\mathbf U^\top \tilde{\mathbf U} $ is $\cos(\theta) = |\omega|/\|\mathbf g\|$. Our objective is to maximize $\theta$, which is equivalent to minimizing the smallest singular value of $\mathbf U^\top \tilde{\mathbf U} $. Hence, the optimization problem~\eqref{opt:original} is simplified as 
\begin{align*}
    \min_{\mathbf a,\mathbf b}:\quad&|\omega|/\|\mathbf g\| \\
    \text{s.t.}\quad &  \|\mathbf a \mathbf b^\top\|_{\text{F}}=\|\mathbf a\|\|\mathbf b\| \le \eta,
\end{align*}
where we use the identity $\|\mathbf a\|\|\mathbf b\| = \|\mathbf a \cdot \mathbf b^\top \|_{\text{F}}$. Expanding the objective function, we have 
\begin{eqnarray}\frac{|\omega|}{\|\mathbf g\|} 
=\frac{|1+\mathbf a_u^\top \mathbf W^{-\top}\mathbf b|}
{\|[\|\mathbf a_{u^\perp}\|\mathbf W^{-\top}\mathbf b,1+\mathbf a_{u}^\top \mathbf W^{-\top}\mathbf b]\|},\label{eq:cost}\end{eqnarray}
where $\mathbf a_u = \mathbf U^\top \mathbf a$.

Since $\mathbf W = \mathbf \Sigma \mathbf V^\top $, we have $\mathbf W^{-\top}\mathbf b = \mathbf \Sigma^{-1}\mathbf V^\top \mathbf b$. As $\mathbf V$ is a unitary matrix, changing the coordinate $\mathbf b \Leftarrow \mathbf V^\top \mathbf b$ does not change the constraint. The value $\mathbf a_u^\top \mathbf W^{-\top}\mathbf b$ in the original coordinate is the same as $\mathbf a_u^\top \mathbf \Sigma^{-1}\mathbf b$ in the new coordinate. In the following, we will use this new coordinate system and the cost function in~\eqref{eq:cost} can be written as
\begin{eqnarray}\frac{|\omega|}{\|\mathbf g\|} 
=\frac{|1+\mathbf a_u^\top \mathbf \Sigma^{-1}\mathbf b|}
{\|[\|\mathbf a_{u^\perp}\|\mathbf\Sigma^{-1}\mathbf b,1+\mathbf a_{u}^\top \mathbf \Sigma^{-1}\mathbf b]\|}.\label{eq:costnew}\end{eqnarray}

The objective function~\eqref{eq:costnew} is zero if and only if the numerator is zero. Using the matrix norm inequality\cite{horn2012matrix},
we have
\begin{align*}
|\mathbf a_u^\top \mathbf \Sigma^{-1}\mathbf b| 
&\le \|\mathbf a_u\| \|\mathbf b\|\|\mathbf \Sigma^{-1}\|_{2}
=\frac{1}{\sigma_n} \|\mathbf a_u\| \|\mathbf b\|\\
&\overset{(a)}{\le}\frac{1}{\sigma_n} \|\mathbf a\| \|\mathbf b\| 
=\frac{1}{\mathbf \sigma_n} \|\mathbf a \mathbf b^\top\|_{\text{F}}\overset{(b)}{\leq} \frac{\eta}{\mathbf \sigma_n} ,
\end{align*}
where $\|\mathbf{\Sigma}^{-1}\|_2$ is the induced 2-norm of matrix $\mathbf{\Sigma}^{-1}$, in (a) we use $\|\mathbf a_u\| \le \|\mathbf a\|$, and (b) is due to the energy constraint. From the inequalities, we conclude that when $\eta < \sigma_n$, we can not make the numerator to be zero. We now consider two different cases depending on whether we can make the numerator to be zero or not.

\noindent\textbf{Case 1}:  
	When $\eta > \sigma_n$, if we set 
$$\mathbf a_u = [0,0,\cdots,-\sigma_n]^\top, \quad  \mathbf{b}=[0,0,\cdots,1]^\top,$$
and any $\|\mathbf{a}_{u^\perp}\|^2 = \hat{a}^2$ with $0<\hat{a}^2 < \eta^2 - \sigma_n^2$, the numerator will be zero. Since $\mathbf a = \mathbf U\mathbf a_u + (\mathbf I - \mathbf U\mathbf U^\top)\mathbf a_{u^\perp}$, the attacker can make the Asimov distance to be $\pi/2$ by setting:
\begin{align}
    \mathbf a =-\sigma_n \mathbf u_n + \hat{a}\mathbf u_q,  \quad   \mathbf b =\mathbf v_n,
\end{align}
where $\mathbf u_q$ is any vector orthogonal to the column space of $\mathbf X$ and $0<\hat{a}^2 < \eta^2-\sigma_n^2$.

\noindent\textbf{Case 2:}
When $\eta \le \sigma_n$, the value of $1+\mathbf a_u^\top  \mathbf \Sigma^{-1} \mathbf b$ can not reach zero. In this case, it is easy to check that minimizing~\eqref{eq:costnew} is equivalent to maximizing
\begin{equation}
 \frac{\|\mathbf a_{u^\perp}\|^2\|\mathbf \Sigma^{-1}\mathbf b\|^2}{(1+\mathbf a_u^\top \mathbf \Sigma^{-1}\mathbf b)^2}.
\label{obj:simple}
\end{equation}
As $\|\mathbf{b}\|=1$, $\|\mathbf \Sigma^{-1}\mathbf b\|^2$ is maximized when $\mathbf b =[0,0,\cdots,1]^\top$. Furthermore, for any fixed norm of $\mathbf a_u$, $(1+\mathbf a_u^\top \mathbf \Sigma^{-1}\mathbf b)^2$ is minimized when $\mathbf a_u =[0,0,\cdots,-\|\mathbf a_u\|]^\top$, $\mathbf b=[0,0,\cdots,1]^\top$. Hence, for fixed norms of $\mathbf a_u$, $\mathbf a_{u^\perp}$, the objective function~\eqref{obj:simple} is maximized when
\begin{eqnarray}\mathbf a_u=[0,0,\cdots,-\|\mathbf a_u\|]^\top, \quad \mathbf b=[0,0,\cdots,1]^\top.\label{eq:aub}\end{eqnarray}
Let $c=\|\mathbf a_{u^\perp}\|, h=\|\mathbf a_u\|$. Using the optimal form of $\mathbf a_u$ and $\mathbf b$ in~\eqref{eq:aub}, the objective function~\eqref{obj:simple} can be simplified to
\begin{align}\nonumber 
    \max_{c,h}: \quad& \frac{c^2/\sigma_n^2}{(1-h/\sigma_n)^2} \\
    \text{s.t.} \quad & (c^2+h^2) \le \eta^2,
    \label{opt:simple2}
\end{align}
It is easy to check that the objective function is maximized when $c^2+h^2=\eta^2$. 
Hence, we have $c^2=\eta^2-h^2$. Inserting this value of $c$ into the objective function and setting the derivative with respect to $h$ to be $0$, we get a unique solution $h =\eta^2/\sigma_n$. 
At this value of $h$, the second derivative is $\frac{-2\sigma_n^2}{(\sigma_n^2-\eta^2)^3}$, which is negative. It indicates that $h=\eta^2/\sigma_n$ is indeed the maximum point. 
Hence $c=\pm\eta\sqrt{1-\eta^2/\sigma_n^2}$. This implies that the optimal solution to problem~\eqref{opt:original} for Case 2 is 
\begin{align*}
    \mathbf a =-\eta^2/\sigma_n\mathbf u_n \pm \eta\sqrt{1-\eta^2/\sigma_n^2}\mathbf u_q, 
    \quad \mathbf b = \mathbf v_n.
\end{align*}

Summarizing the discussion above, we have the following proposition regarding the optimal value of problem~\eqref{opt:original} in the full-rank case.
\begin{proposition}\label{prop:fullrank}
In the full rank case, the optimal value of~\eqref{opt:original} is 
\begin{align*}
\theta^* =
\begin{cases}
\pi/2, \, &\mbox{if } \eta > \sigma_n \\
\arcsin{(\eta/\sigma_n)}, \,&\mbox{if } \eta \le \sigma_n
\end{cases}.
\end{align*}
\end{proposition}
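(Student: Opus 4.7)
The plan is to reduce the maximization of $\theta$ to the minimization of the smallest singular value of $\mathbf{U}^\top \tilde{\mathbf{U}}$, where $\mathbf{U},\tilde{\mathbf{U}}$ are orthonormal bases of the column spaces of $\mathbf{X}$ and $\hat{\mathbf{X}}$, and then exploit the closed-form rank-one update for $\tilde{\mathbf{U}}$ from \cite{zimmermann2017closed}. Direct computation gives $\mathbf{U}^\top \tilde{\mathbf{U}} = \mathbf{I} + \alpha \mathbf{w}\mathbf{w}^\top$, whose smallest singular value is $1+\alpha = |\omega|/\|\mathbf{g}\|$; hence the problem reduces to minimizing the ratio in~\eqref{eq:cost}. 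Since $\mathbf{V}$ is orthogonal, the change of variable $\mathbf{b}\leftarrow\mathbf{V}^\top\mathbf{b}$ preserves both the norm constraint and the objective (after relabeling), yielding~\eqref{eq:costnew}.

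Next I would split into two regimes by asking whether the numerator $|1+\mathbf{a}_u^\top\mathbf{\Sigma}^{-1}\mathbf{b}|$ can be driven to zero. A Cauchy--Schwarz / operator-norm estimate gives $|\mathbf{a}_u^\top \mathbf{\Sigma}^{-1}\mathbf{b}| \le \|\mathbf{a}_u\|\|\mathbf{b}\|/\sigma_n \le \eta/\sigma_n$, so vanishing of the numerator is possible only when $\eta\ge\sigma_n$. For $\eta>\sigma_n$ I would exhibit the explicit feasible attack $\mathbf{a}=-\sigma_n\mathbf{u}_n+\hat a\,\mathbf{u}_q$, $\mathbf{b}=\mathbf{v}_n$, with $\mathbf{u}_q$ a unit vector orthogonal to $\mathrm{span}(\mathbf X)$ and $0<\hat a^2<\eta^2-\sigma_n^2$; a quick substitution shows the numerator is zero, hence $\theta=\pi/2$.

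The substantive case is $\eta\le\sigma_n$, where the numerator stays bounded away from $0$, so minimizing~\eqref{eq:costnew} is equivalent to maximizing the ratio~\eqref{obj:simple}. Here I would first argue a decoupling step: for fixed norms $\|\mathbf{a}_u\|=h$ and $\|\mathbf{a}_{u^\perp}\|=c$ (and $\|\mathbf{b}\|=1$), $\|\mathbf{\Sigma}^{-1}\mathbf{b}\|^2 = \sum_i b_i^2/\sigma_i^2$ is maximized at $\mathbf{b}=\mathbf{e}_n$ since $\sigma_n$ is the smallest singular value; and given $\mathbf{b}=\mathbf{e}_n$, the denominator $(1+\mathbf{a}_u^\top\mathbf{\Sigma}^{-1}\mathbf{b})^2 = (1+a_{u,n}/\sigma_n)^2$ is minimized by $\mathbf{a}_u=-h\mathbf{e}_n$ (Cauchy--Schwarz with the correct sign). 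This collapses the problem to the two-variable program~\eqref{opt:simple2}. Tightness of the budget at optimality is immediate (scaling $c$ up strictly increases the objective), so I substitute $c^2=\eta^2-h^2$, differentiate in $h$, obtain the unique critical point $h=\eta^2/\sigma_n$, and verify via a second-derivative check that it is a maximum. Back-substitution yields $\cos\theta=\sqrt{1-\eta^2/\sigma_n^2}$, i.e.\ $\theta=\arcsin(\eta/\sigma_n)$.

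The step I expect to require the most care is the decoupling argument for $\mathbf{a}_u$ and $\mathbf{b}$ in Case 2: the numerator and denominator of~\eqref{obj:simple} share these variables, so one cannot simply optimize them independently. The clean way I would handle this is the "fix-the-norms" approach above, which converts the direction choice into two separate extremal problems on the unit sphere and reduces everything to the one-dimensional calculus in $h$, which is routine.
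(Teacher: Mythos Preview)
Your proposal is correct and follows essentially the same route as the paper: the rank-one update of $\tilde{\mathbf U}$ from \cite{zimmermann2017closed}, the reduction to $|\omega|/\|\mathbf g\|$, the coordinate change $\mathbf b\leftarrow\mathbf V^\top\mathbf b$, the Cauchy--Schwarz split into the two regimes, the explicit construction for $\eta>\sigma_n$, and the fix-the-norms reduction to the scalar program~\eqref{opt:simple2} followed by the calculus in $h$. The only point to tighten is the decoupling step: as you wrote it, you first pick $\mathbf b=\mathbf e_n$ to maximize the numerator and \emph{then} optimize $\mathbf a_u$, which by itself does not rule out a different $\mathbf b$ giving a larger ratio; the paper closes this by noting that the \emph{same} choice $\mathbf b=\mathbf e_n$, $\mathbf a_u=-h\mathbf e_n$ also globally minimizes the denominator over all directions with fixed norms (again by Cauchy--Schwarz), so numerator and denominator are simultaneously extremized at one point and the ratio is therefore maximal there.
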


\subsubsection{Low-Rank Case}
We now consider the case where $\mathbf{X}$ is not full rank. Let $k<\text{min}(d,n)$ be the rank of $\mathbf X$. In this subsection, with a slight abuse of notation, we write the full SVD of $\mathbf{X}$ as $\mathbf X = \mathbf U \mathbf \Sigma \mathbf V^\top$.
The optimal attack matrix could be found by solving
\begin{align}
    \max_{\mathbf a\in \mathbb{R}^d,\mathbf b\in \mathbb{R}^n}:\quad & \mathcal{\theta}\big(\mathbf X,g_k(\hat{\mathbf{X}})\big)\label{opt:low_rank_1} \\ \nonumber 
    \text{s.t.}\quad &
    \hat{\mathbf X} = \mathbf X + \mathbf a\mathbf b^\top, \\
    &\|\mathbf a\|\|\mathbf b\| \le \eta.\nonumber     
\end{align}
We can further simplify this optimization problem as
\begin{align} 
    \max_{\mathbf a\in \mathbb{R}^{k+1}, \mathbf b\in \mathbb{R}^{k+1}}:\quad & \theta\big(\tilde{\mathbf \Sigma}, g_k(\mathbf Y)\big) \label{opt:low_rank_3} \\\nonumber 
    \text{s.t.}\quad & \mathbf Y = \tilde{\mathbf \Sigma} + \mathbf a\mathbf b^\top, \\
    &\|\mathbf a\|\|\mathbf b\| \le \eta,  \nonumber     
\end{align}
where $\tilde{\mathbf \Sigma} =\text{diag}(\sigma_1,\sigma_2,\cdots,\sigma_k,0)$ and $\{\sigma_1,\sigma_2,\cdots,\sigma_k\}$ are singular values of $\mathbf X$.  
Detailed proof of the equivalence between~\eqref{opt:low_rank_1} and~\eqref{opt:low_rank_3} can be found in Appendix~\ref{app:eq12-eq13}. Here, we describe the main idea of the proof. The main step of the simplification is to left multiply the unitary matrix $\mathbf U^\top$ and right multiply the unitary matrix $\mathbf V$ on both $\mathbf X$ and $\Hat{\mathbf X}$. Note that multiplying a unitary matrix does not change the column space and its singular values. In addition, a rank-one modification can only add at most one principal component orthogonal to its original column subspace. Hence, by changing the coordinates, $\mathbf a$ and $\mathbf b$ are $k+1$ dimensional vectors.  

To solve problem~\eqref{opt:low_rank_3}, we divide it into two cases based on the value of the energy budget. 

\noindent\textbf{Case 1}:
When $\eta > \sigma_k$, it is simple to verify that the solution \\
$\mathbf a=[0,0,\cdots,\eta]^\top$, $\mathbf b=[0,0,\cdots,1]^\top$ leads to the maximal Asimov distance, which is $\pi/2$. 

\noindent\textbf{Case 2}:
When $\eta \le  \sigma_k$, the following theorem characterizes the form of optimal $\mathbf a$ and $\mathbf b$. 

\begin{theorem}\label{thm:ab}
There exists an optimal solution to problem~\eqref{opt:low_rank_3} in the following form \begin{align}
    \mathbf a  =[0,\cdots,0,a_k,a_{k+1}]^\top,
    \mathbf b  =[0,0,\cdots,0,1,0]^\top, 
    \label{opt:ab}
\end{align}
with $a_k^2 + a_{k+1}^2 = \eta^2$. 
\end{theorem}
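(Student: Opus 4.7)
My plan is to first convert the Asimov distance into an explicit scalar function of $\mathbf Y$ by exploiting the low ambient dimension, and then use that formula to pin down the support of an optimal $(\mathbf a,\mathbf b)$.

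Both $\mathbb X=\mathrm{span}(\mathbf e_1,\ldots,\mathbf e_k)$ (the top-$k$ left singular space of $\tilde{\mathbf\Sigma}$) and $g_k(\mathbf Y)$ are $k$-dimensional subspaces of $\mathbb R^{k+1}$, so the largest principal angle between them coincides with the angle between their one-dimensional orthogonal complements. Denoting by $\mathbf u$ a unit vector spanning $g_k(\mathbf Y)^\perp$ (i.e., the smallest left singular vector of $\mathbf Y$), this gives $\cos\theta=|u_{k+1}|$, so the task reduces to minimizing $u_{k+1}^2$ over feasible $(\mathbf a,\mathbf b)$.

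To make this tractable I would use the identity $\mathbf Y\mathbf v=(\tilde{\mathbf\Sigma}+\mathbf a\mathbf b^\top)\mathbf v=\tilde{\mathbf\Sigma}\mathbf v$ whenever $\mathbf v\perp\mathbf b$. When $b_{k+1}=0$, this forces $\mathbf Y\mathbf e_{k+1}=\mathbf 0$, making $\mathbf Y$ rank-deficient and $\mathbf u$ a generator of the left null space. Imposing orthogonality of $\mathbf u$ against each column $\sigma_i\mathbf e_i+b_i\mathbf a$ of $\mathbf Y$, introducing the scalar $\alpha=\mathbf u^\top\mathbf a$, and combining with $\|\mathbf u\|=1$ and the self-consistency equation $\alpha=\mathbf u^\top\mathbf a$ yields after a short calculation
\[
\tan^2\theta=\frac{a_{k+1}^2\,D}{C^2},\quad D=\sum_{i\le k}\frac{b_i^2}{\sigma_i^2},\quad C=1+\sum_{i\le k}\frac{b_i a_i}{\sigma_i}.
\]
Since $\sum_{i\le k}b_i^2=1$, we have $D\le 1/\sigma_k^2$ with equality iff $\mathbf b=\mathbf e_k$. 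Fixing $\mathbf b=\mathbf e_k$ reduces the objective to $\tan^2\theta=a_{k+1}^2/(\sigma_k+a_k)^2$, depending only on $(a_k,a_{k+1})$; therefore $a_1,\ldots,a_{k-1}$ only consume budget without contributing, forcing them to vanish, and the energy constraint binds, $a_k^2+a_{k+1}^2=\eta^2$, yielding the stated optimal form.

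The main obstacle is ruling out $b_{k+1}\ne 0$ at the optimum. In that regime $\mathbf Y$ is generically full rank, the smallest left singular vector is no longer characterized by the null space, and $u_{k+1}^2$ must be extracted from the spectrum of $\mathbf Y\mathbf Y^\top$, a diagonal-plus-rank-$2$ symmetric matrix whose smallest eigenpair has no clean elementary closed form. I would attempt this either through a direct exchange/monotonicity argument showing that transferring any mass of $\mathbf b$ out of coordinate $k+1$ into coordinate $k$ does not decrease the angle (heuristically, such mass only perturbs the already-zero column $k+1$ of $\tilde{\mathbf\Sigma}$ and is wasteful compared with perturbing column $k$, whose small signal $\sigma_k$ gives the greatest leverage), or via a limiting argument letting $b_{k+1}\to 0$ with the energy budget held tight, thereby reducing the general case to the clean rank-deficient analysis above.
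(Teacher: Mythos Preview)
Your orthogonal-complement reduction to minimizing $|u_{k+1}|$ is correct and is essentially the paper's $\mathbb T,\mathbf p,\mathbf q$ picture seen from the other side, and your closed form $\tan^2\theta=a_{k+1}^2D/C^2$ in the rank-deficient regime $b_{k+1}=0$ is right (with one small caveat: maximizing $D$ alone does not justify fixing $\mathbf b=\mathbf e_k$; you must also note that the same choice minimizes $|C|$ for given $\|(a_1,\dots,a_k)\|$, which follows from $|\sum_{i\le k}a_ib_i/\sigma_i|\le\|(a_1,\dots,a_k)\|/\sigma_k$---this is exactly the full-rank computation of Section~\ref{sec:kerank}). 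The genuine gap is the one you flag yourself: ruling out $b_{k+1}\ne 0$. Neither proposed fix works as written. The limiting argument only shows that the $b_{k+1}=0$ regime is approachable, not that the optimum lives there; and the exchange heuristic (``mass in coordinate $k{+}1$ is wasteful'') is precisely the claim in dispute. Concretely, when $b_{k+1}\ne 0$ and $a_{k+1}\ne 0$ the columns of $\mathbf Y$ span all of $\mathbb R^{k+1}$, so $\mathbf Y$ is full rank, your null-space identity for $\mathbf u$ collapses, and you are left with the smallest eigenpair of a diagonal-plus-rank-two matrix with no clean handle.

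The paper sidesteps this by reversing the order of the argument. The dimension count alone gives $\dim(\mathbb P\cap\mathbb Q)=k-1$ and hence $\mathbf q\in\mathrm{span}(\mathbf p,\mathbf e_{k+1})$ for \emph{every} feasible $(\mathbf a,\mathbf b)$, with no restriction on $b_{k+1}$. The paper then argues, by a variance comparison, that an optimal attack forces $\mathbf p=\mathbf e_k$ and that $\mathbf a$ may be taken with support in $\{k,k+1\}$. Only after $\mathbf a$ is pinned down does it determine $\mathbf b$: with $\mathbf a=[0,\dots,0,a_k,a_{k+1}]^\top$ one computes directly $v_1=\mathbf e_k^\top\mathbf Y\mathbf Y^\top\mathbf e_k=a_k^2+\sigma_k^2+2\sigma_k a_k b_k$ and $v_2=\mathbf e_{k+1}^\top\mathbf Y\mathbf Y^\top\mathbf e_{k+1}=a_{k+1}^2$, both independent of $b_{k+1}$; since the angle grows by making $v_1$ small (and $v_2$ is already fixed), all of $\mathbf b$'s unit mass should go to coordinate $k$, and $b_{k+1}=0$ drops out for free. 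If you want to keep your analytic route, the cleanest repair is to import exactly this ordering: first establish the support of $\mathbf a$ (for which your null-space formula is not yet available), and only then specialize $\mathbf b$.
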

\begin{proof}
Please see Appendix \ref{app:thm1}. 
\end{proof}
In the following, we will find the optimal values of $a_k$ and $a_{k+1}$. Since $\|\mathbf a\|^2 =\eta^2$ and $\mathbf a$ is in the form of \eqref{opt:ab}, we can write $\mathbf a =\eta[0,0,\cdots,\cos(\alpha),\sin(\alpha)]^\top$, where $\alpha \in [0,2\pi)$. To compute the $k$ leading principal components of $\mathbf Y$, we can perform the eigenvalue decomposition of $\mathbf Y \mathbf Y^\top$,
\begin{equation*}
    \mathbf Y\mathbf Y^\top =
    \begin{bmatrix}
    \mathbf \Lambda^2_{k-1} & \mathbf 0 \\
    \mathbf 0         & \mathbf c \mathbf c^\top
    \end{bmatrix}, 
\end{equation*}
where $\mathbf c =[\sigma_k+\eta\cos{\alpha},\eta \sin(\alpha)]^\top$, $\mathbf \Lambda_{k-1} = \text{diag}(\sigma_1,\sigma_2,\cdots,\sigma_{k-1})$. 
Suppose the compact SVD of $\mathbf Y\mathbf Y^\top $ is 
$\mathbf Y\mathbf Y^\top = \Hat{\mathbf U}\Hat{\mathbf \Sigma}\Hat{\mathbf V}^\top,$ where 
\begin{equation*}
\Hat{\mathbf U} =
\begin{bmatrix}
\mathbf I_{k-1} & \mathbf 0\\
\mathbf 0 & \mathbf z
\end{bmatrix},
\end{equation*}
and $\mathbf z\in \mathbb{R}^2$ is the eigenvector of $\mathbf c\mathbf c^\top$ corresponding to its nonzero eigenvalue. Since one orthonormal basis of $\text{span}(\tilde{\mathbf{\Sigma}})$ is $[\mathbf I_k,\mathbf 0]^\top$, the Asimov distance is determined by the singular values of 
\begin{equation*}
    \begin{bmatrix}
    \mathbf I_k \\
    \mathbf 0 
    \end{bmatrix}^\top 
    \cdot
    \begin{bmatrix}
    \mathbf I_{k-1} & \mathbf 0 \\
    \mathbf 0 & \mathbf z
    \end{bmatrix}
    =
    \begin{bmatrix}
    \mathbf I_{k-1} &\mathbf 0 \\
    \mathbf 0 & z_1
    \end{bmatrix}.
\end{equation*}
Hence, the Asimov distance is $\arccos(|z_1|)$. Since $\mathbf c$ is the eigenvector of $\mathbf c\mathbf c^\top$ corresponding to its nonzero eigenvalue, we have 
$|z_1| = \frac{|c_1|}{\|\mathbf c\|}$. 
Our objective function is reduced to 
\begin{align}\label{eq:equivalent}
    \min_{\alpha \in [0,2\pi)}: \quad 
    \frac{|\sigma_k + \eta \cos(\alpha)|}{\|[\sigma_k+\eta\cos(\alpha),\eta \sin(\alpha)]\|}.
\end{align}
It is simple to show that the optimal solution to~\eqref{eq:equivalent} is 
\begin{equation}
\alpha^* = \arccos(- \eta/\sigma_k)
\label{opt_alpha1}
\end{equation}
or 
\begin{equation}
\alpha^* = 2\pi - \arccos(- \eta/\sigma_k).
\label{opt_alpha2}
\end{equation}
Substitute the optimal solution of $\alpha^*$ in~\eqref{opt_alpha1} or~\eqref{opt_alpha2} into the objective of problem~\eqref{eq:equivalent}, we have
$
\sin(\theta^*) = \eta/\sigma_k.
$
Hence, the optimal solution to problem~\eqref{opt:low_rank_3} is
\begin{align*}
    \mathbf a&=\left[0,0,\cdots,-\eta^2/\sigma_k,\pm \eta \sqrt{1-\eta^2/\sigma_k^2}\right]^\top, \\
    \mathbf b&=[0,0,\cdots,0,1,0]^\top,
\end{align*}
which indicates that the optimal solution to problem~\eqref{opt:low_rank_1} is 
\begin{align*}
    \mathbf a  = -\eta^2/\sigma_k \mathbf u_k \pm \eta
    \sqrt{1-\eta^2/\sigma_k^2}\mathbf u_q,\quad 
    \mathbf b =\mathbf v_k, 
\end{align*}
where $\mathbf u_q$ is any vector orthogonal to the column space of $\mathbf X$. 
The corresponding optimal subspace distance is
 $   \theta^* = \arcsin(\eta/\sigma_k).$
In summary, we have
\begin{proposition}
The optimal Asimov distance in the low-rank case is
\begin{align}
\theta^* =
\begin{cases}
\pi/2, \, &\mbox{if } \eta > \sigma_k \\
\arcsin{(\eta/\sigma_k)}, \,&\mbox{if } \eta \le \sigma_k
\end{cases}.
\label{opt:rank-one-theta}
\end{align}
\end{proposition}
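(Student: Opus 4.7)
The plan is to reduce the rank-one attack optimization to a single-variable trigonometric minimization and solve it in closed form. First, I would invoke the equivalence between problem~\eqref{opt:low_rank_1} and problem~\eqref{opt:low_rank_3}: left-multiplying by $\mathbf{U}^\top$ and right-multiplying by $\mathbf{V}$ preserves column spans, singular values, and the Frobenius norm of $\mathbf{a}\mathbf{b}^\top$, while a rank-one perturbation can add at most one direction outside the original column space, so it suffices to restrict $\mathbf{a},\mathbf{b}\in\mathbb{R}^{k+1}$ and work with the diagonal matrix $\tilde{\mathbf{\Sigma}}=\mathrm{diag}(\sigma_1,\ldots,\sigma_k,0)$.

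Next I would split on the energy regime. If $\eta>\sigma_k$, I would exhibit the attack $\mathbf{a}=[0,\ldots,0,\eta]^\top$, $\mathbf{b}=[0,\ldots,0,1]^\top$, which places a new principal component along the $(k{+}1)$-th coordinate axis, a direction orthogonal to $\mathrm{span}(\tilde{\mathbf{\Sigma}})$; this realizes $\theta=\pi/2$, matching the trivial upper bound. This case is a direct verification.

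The substantive case is $\eta\le\sigma_k$. I would apply Theorem~\ref{thm:ab} to reduce the search to $\mathbf{a}=\eta[0,\ldots,0,\cos\alpha,\sin\alpha]^\top$ and $\mathbf{b}=[0,\ldots,0,1,0]^\top$, with a single angular parameter $\alpha$. The matrix $\mathbf{Y}\mathbf{Y}^\top$ is then block-diagonal with a $(k{-}1)\times(k{-}1)$ block $\mathbf{\Lambda}_{k-1}^2$ and a $2\times 2$ rank-one block $\mathbf{c}\mathbf{c}^\top$, where $\mathbf{c}=[\sigma_k+\eta\cos\alpha,\ \eta\sin\alpha]^\top$. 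Consequently, $g_k(\mathbf{Y})$ agrees with $\mathrm{span}(\tilde{\mathbf{\Sigma}})$ on the first $k{-}1$ coordinate axes, and the only nontrivial principal angle is $\arccos(|c_1|/\|\mathbf{c}\|)$. Maximizing the Asimov distance thus reduces to minimizing the scalar expression $|\sigma_k+\eta\cos\alpha|/\|\mathbf{c}\|$ over $\alpha$.

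Finally, I would solve this single-variable problem by rewriting the objective as $1/\sqrt{1+\eta^2\sin^2\alpha/(\sigma_k+\eta\cos\alpha)^2}$, differentiating with respect to $\alpha$, and identifying the critical point $\cos\alpha^*=-\eta/\sigma_k$ (geometrically, the tangent from the origin to the circle of radius $\eta$ centered at $\sigma_k$ on the real axis). Substituting back gives $\sin\theta^*=\eta/\sigma_k$, hence $\theta^*=\arcsin(\eta/\sigma_k)$. Combined with the $\eta>\sigma_k$ case, this yields the piecewise formula in~\eqref{opt:rank-one-theta}. The hardest conceptual step is the reduction via Theorem~\ref{thm:ab}, which collapses $2(k{+}1)$ scalar unknowns to one angular parameter; once that is in hand, the remaining minimization is routine trigonometry and the eigenvalue decomposition of $\mathbf{c}\mathbf{c}^\top$ is elementary.
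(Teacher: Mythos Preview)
Your proposal is correct and follows essentially the same route as the paper: the same reduction to $\tilde{\mathbf{\Sigma}}$ via the unitary invariance argument, the same explicit attack in the $\eta>\sigma_k$ case, the same invocation of Theorem~\ref{thm:ab} to collapse to a single angular parameter, the same block structure of $\mathbf{Y}\mathbf{Y}^\top$ with $\mathbf{c}=[\sigma_k+\eta\cos\alpha,\ \eta\sin\alpha]^\top$, and the same single-variable minimization yielding $\cos\alpha^*=-\eta/\sigma_k$. Your geometric remark about the tangent line is a pleasant addition but not a methodological departure.
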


The result is similar to the full column rank case characterized in Proposition~\ref{prop:fullrank}. 

\subsection{Case with $k<\text{rank}(\mathbf{X})$}\label{sec:knerank}


In this section, we consider the more practical but much more challenging case with $k<\text{rank}(\mathbf{X})$. 

Given the data matrix $\mathbf X \in \mathbb{R}^{d\times n}$, without loss of generality, we assume $d \le n$ and $\text{rank}(\mathbf X)=d$. Assume the full SVD of $\mathbf{X}$ is $\mathbf X = \mathbf U \mathbf{\Sigma}\mathbf{V}$, where $\mathbf{U}\in \mathbb{R}^{d\times d}$, $\mathbf{\Sigma}\in \mathbb{R}^{d\times n}$, $\mathbf{V}\in \mathbb{R}^{n\times n}$, and the singular values of $\mathbf{X}$ are $\{\sigma_1,\sigma_2,\cdots,\sigma_k,\cdots,\sigma_d\}$. Recall that we denote $g_k(\cdot)$ as the PCA  operation that computes the $k$ leading principal components. In this scenario, as the original data matrix is not low-rank, we will perform PCA both on the original data matrix and on the modified data matrix. Hence, the optimal rank-one modification matrix can be found by solving the following optimization problem
\begin{align}
    \max_{\mathbf{a} \in \mathbb{R}^d,\mathbf{b}\in \mathbb{R}^n}:\quad &\theta\big(g_k(\mathbf X),g_k(\hat{\mathbf X})\big) 
    \label{opt:general-original}\\\nonumber
    \text{s.t.}\quad & \hat{\mathbf{X}} = \mathbf{X} + \mathbf{a}\mathbf{b}^\top, \\ \nonumber
    & \|\mathbf{a}\mathbf{b}^\top\|_{\text{F}} \le \eta. 
\end{align}
By diagonalizing the data matrix and using similar arguments in Appendix~\ref{app:eq12-eq13}, \eqref{opt:general-original} can be further simplified as 
\begin{align}
    \max_{\mathbf{a}\in \mathbb{R}^d,\mathbf{b}\in \mathbb{R}^n}:
    \quad &\theta\big(g_k(\mathbf{\Sigma}),g_k(\mathbf{Y})\big)
    \label{opt:general-simplified}\\ \nonumber 
    \text{s.t.}\quad & \mathbf{Y} = \mathbf{\Sigma} + \mathbf{a}\mathbf{b}^\top, \\ \nonumber
    &\|\mathbf{a}\mathbf{b}^\top\|_{\text{F}}\le \eta,
\end{align}
where $g_k(\mathbf{\Sigma})=[\mathbf{I}_k,\mathbf{0}]^\top\in \mathbb{R}^{d\times k}$. Here we also perform variable change $\mathbf a \Leftarrow \mathbf{U}^\top \mathbf a$ and $\mathbf{b} \Leftarrow \mathbf{V}^\top \mathbf{b}$. To solve this optimization problem, we divide it into two cases depending on the energy budget and the difference between $\sigma_k$ and $\sigma_{k+1}$. 

\noindent\textbf{Case 1}: When $\eta\ge \sigma_k - \sigma_{k+1}$, we have one simple solution $\mathbf{a}=[0,0,\cdots,0,\eta,0,\cdots,0]^\top$, where $\eta$ is in the ($k+1$)th coordinate, and $\mathbf{b}=[0,0,\cdots,0,1,0,\cdots,0]^\top$, where element $1$ is in the ($k+1$)th coordinate. Clearly, this setting of $\mathbf{a}$ and $\mathbf{b}$ leads to the maximal subspace distance, which is $\pi/2$. 

\noindent\textbf{Case 2}: When $\eta < \sigma_k-\sigma_{k+1}$, the following theorem gives the form of the optimal solution. 
\begin{theorem}\label{theorem:rank-one-optimality}
The optimal solution to problem~\eqref{opt:general-simplified} should be in the form of
\begin{align} 
    \mathbf{a} &=[0,0,\cdots,a_k,a_{k+1},0,\cdots,0]^\top ,
    \label{opt:a}\\
    \mathbf{b} 
    &= [0,0,\cdots,b_k,b_{k+1},0,\cdots,0]^\top ,
    \label{opt:b}
\end{align}
where $a_k^2+a_{k+1}^2 =\eta^2$ and $b_k^2+b_{k+1}^2 = 1$. 
\end{theorem}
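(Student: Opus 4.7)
My plan is to restrict the support of an optimizer $(\mathbf{a},\mathbf{b})$ to the boundary index pair $\{k,k+1\}$ via a monotonicity/exchange argument. By homogeneity (as noted in the rank-one discussion of Section~\ref{sec:rankone}), I can assume $\|\mathbf{b}\|=1$ and $\|\mathbf{a}\|=\eta$ at any optimizer. Under Case~2 ($\eta<\sigma_k-\sigma_{k+1}$), Weyl's inequality applied to the rank-one perturbation $\mathbf{Y}=\mathbf{\Sigma}+\mathbf{a}\mathbf{b}^\top$ yields $\sigma_k(\mathbf{Y})\ge\sigma_k-\eta>\sigma_{k+1}+\eta\ge\sigma_{k+1}(\mathbf{Y})$, so the top-$k$ left singular subspace $g_k(\mathbf{Y})$ is uniquely defined and coincides with the top-$k$ eigenspace of
\[
\mathbf{M}=\mathbf{Y}\mathbf{Y}^\top=\mathbf{D}+\mathbf{a}\mathbf{a}^\top+\mathbf{a}\mathbf{q}^\top+\mathbf{q}\mathbf{a}^\top,
\]
where $\mathbf{D}=\operatorname{diag}(\sigma_1^2,\dots,\sigma_d^2)$ and $\mathbf{q}=\mathbf{\Sigma}\mathbf{b}$ has coordinates $q_i=\sigma_i b_i$. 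Observe that $\mathbf{M}-\mathbf{D}$ is a symmetric rank-2 perturbation which, by completing the square, admits the signed decomposition $(\mathbf{a}+\mathbf{q})(\mathbf{a}+\mathbf{q})^\top-\mathbf{q}\mathbf{q}^\top$.

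The core of the argument is a monotonicity claim: if $b_{i_0}\ne 0$ for some $i_0\notin\{k,k+1\}$, then transferring the mass at coordinate $i_0$ into coordinate $k$ (when $i_0<k$) or $k+1$ (when $i_0>k+1$) cannot decrease the Asimov distance, and symmetrically for $\mathbf{a}$. Geometrically, the principal angle is controlled by the off-diagonal coupling between the top-$k$ and bottom-$(d-k)$ blocks of $\mathbf{M}$, which is driven by the cross products $a_i q_j$ and $a_j q_i$ with $i\le k<j$. In the secular-type expansion for the top-$k$ eigenvectors of a rank-2 perturbation of $\mathbf{D}$, the contribution of coordinate $i_0$ enters with a weight proportional to $1/|\sigma_{i_0}^2-\lambda|$ for a $\lambda$ lying near $\sigma_k^2$ or $\sigma_{k+1}^2$; the edge indices $k$ and $k+1$ minimize this denominator, so concentrating the mass there maximizes the coupling subject to the spectral-gap assumption.

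The hardest step is turning this heuristic into a rigorous pointwise comparison, since the rank-2 (rather than rank-1) structure of $\mathbf{M}-\mathbf{D}$ blocks a direct appeal to the scalar secular equation. I would handle this by reducing each exchange to a principal submatrix on the small index set $\{i_0,k,k+1\}$ together with any active ``partner'' coordinate of the other vector, where the top-$k$ eigenvectors can be expressed quasi-explicitly; the monotonicity then reduces to an algebraic inequality that one verifies by computing the derivative of the smallest singular value of the $k\times k$ top block with respect to the transferred mass, in the spirit of the appendix proof of Theorem~\ref{thm:ab}. Iterating this exchange over all non-edge indices of both $\mathbf{a}$ and $\mathbf{b}$ yields the claimed support, and the active constraints $\|\mathbf{a}\|=\eta$, $\|\mathbf{b}\|=1$ then give $a_k^2+a_{k+1}^2=\eta^2$ and $b_k^2+b_{k+1}^2=1$, completing the characterization.
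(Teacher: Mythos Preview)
Your proposal takes a genuinely different route from the paper, and the central step you flag as ``hardest'' is indeed a real gap that you have not closed. The exchange argument you sketch---moving mass in $\mathbf{b}$ (or $\mathbf{a}$) from an index $i_0\notin\{k,k+1\}$ to the boundary index---requires a monotonicity statement about the smallest singular value of the $k\times k$ block of $g_k(\mathbf{\Sigma})^\top g_k(\mathbf{Y})$. You propose to verify this by restricting to a principal submatrix on $\{i_0,k,k+1\}$ and differentiating, but this reduction is not justified: the top-$k$ eigenspace of $\mathbf{M}=\mathbf{Y}\mathbf{Y}^\top$ is a global object, and its restriction to a few coordinates is not governed by the corresponding principal submatrix of $\mathbf{M}$. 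The rank-$2$ structure of $\mathbf{M}-\mathbf{D}$ that you correctly identify is exactly what prevents the secular-equation shortcut, and nothing in your outline circumvents it. Moreover, the transfer in $\mathbf{b}$ interacts with the current configuration of $\mathbf{a}$ through $\mathbf{q}=\mathbf{\Sigma}\mathbf{b}$, so a coordinate-wise exchange need not be monotone uniformly in $\mathbf{a}$; your plan does not address this coupling.

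The paper bypasses all of this perturbation machinery with a geometric observation you are missing: write $\mathbb{P}=\operatorname{span}(g_k(\mathbf{\Sigma}))$, $\mathbb{Q}=\operatorname{span}(g_k(\mathbf{Y}))$, and $\mathbb{T}=\mathbb{P}\cap\mathbb{Q}$. The Asimov angle is then exactly the angle between the one-dimensional complements $\mathbf{p}\in\mathbb{P}\cap\mathbb{T}^\perp$ and $\mathbf{q}\in\mathbb{Q}\cap\mathbb{T}^\perp$. Since $\mathbf{q}\perp\mathbb{T}$ and $\mathbf{q}\in\operatorname{span}[\mathbb{T},\mathbf{p},\mathbf{e}_{k+1},\dots,\mathbf{e}_d]$, one has $\mathbf{q}\in\operatorname{span}[\mathbf{p},\mathbf{e}]$ for some $\mathbf{e}$ in the bottom block. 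A direct variance comparison (smallest variance direction inside the top block, largest inside the bottom block) then forces $\mathbf{p}=\mathbf{e}_k$ and $\mathbf{e}=\mathbf{e}_{k+1}$, after which the support claims for $\mathbf{a}$ and $\mathbf{b}$ follow from the same projected-variance reasoning used in Appendix~\ref{app:thm1}. This avoids any eigenvector perturbation analysis and reduces the whole problem to a two-dimensional picture from the outset; your Weyl-inequality step is the only ingredient the two arguments share.
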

\begin{proof}
Please see Appendix \ref{app:thm2} for details.
\end{proof}

As the optimal solution of $\mathbf{a}$ and $\mathbf{b}$ are in the form of~\eqref{opt:a} and~\eqref{opt:b}, we can parametrize $\mathbf{a}$ and $\mathbf{b}$ with parameters $\alpha$ and $\beta$ using $\mathbf{a} = \eta[0,0,\cdots,\cos(\alpha),\sin(\alpha),0,\cdots,0]^\top$ and $\mathbf{b} = [0,0,\cdots,\cos(\beta),\sin(\beta),0,\cdots,0]^\top$ respectively. 

As a result, the modified data matrix $\mathbf{Y}$ can be written as 
\begin{align}
    \nonumber 
    \mathbf{Y}&=
    \begin{bmatrix}
    \mathbf{\Sigma}_1 & \mathbf{0}& \mathbf{0} &\mathbf{0}\\
    \mathbf{0}   &\mathbf{\Sigma}_2 &\mathbf{0} &\mathbf{0}\\
    \mathbf{0}   &\mathbf{0} &\mathbf{\Sigma}_3& \mathbf{0}\\
    \end{bmatrix},
\end{align}
where $\mathbf{\Sigma}_1 = \text{diag}(\sigma_1,\sigma_2,\cdots,\sigma_{k-1})$, $\mathbf{\Sigma}_3=\text{diag}(\sigma_{k+2},\cdots,\sigma_d)$, and 
\begin{align}
\mathbf{\Sigma}_2=
    \begin{bmatrix}
     \sigma_k + \eta\cos(\alpha)\cos(\beta)  &\eta\cos(\alpha)\sin(\beta) \\
     \eta\sin(\alpha)\cos(\beta) &\sigma_{k+1}+\eta\sin(\alpha)\sin(\beta)
    \end{bmatrix}.
    \label{Sigma2}
\end{align}
Since $\mathbf{Y}$ has the pseudo block diagonal form,
the singular values and principal components of $\mathbf{Y}$ are determined by the SVD of $\mathbf{\Sigma}_1$, $\mathbf{\Sigma_2}$, and $\mathbf{\Sigma}_3$. For notation convenience, we denote  $\mathbf{\Sigma_2}=\mathbf{D}+\eta\bar{\mathbf a}\bar{\mathbf b}^\top$, where $\mathbf{D}=\text{diag}(\sigma_k,\sigma_{k+1})$, $\bar{\mathbf{a}}=[\cos{\alpha},\sin{\alpha}]^\top$, and $\bar{\mathbf{b}}=[\cos{\beta},\sin{\beta}]^\top$. Let $\xi_1$ and $\xi_2$ be the two singular values of $\mathbf{\Sigma}_2$ and denote their corresponding left singular vectors as
\begin{align}
    \mathbf{W} = [\mathbf{w}_1\, \mathbf{w}_2]=
    \begin{bmatrix}
    \cos \varphi & -\sin\varphi \\
    \sin \varphi & \cos\varphi
    \end{bmatrix}.
    \label{eig:W}
\end{align}
The following lemma characterizes the form of $k$-dimensional subspace learned by PCA from $\mathbf{Y}$.
\begin{lemma}
\begin{align*}
  g_k(\mathbf{Y}) = 
  \begin{bmatrix}
  \mathbf{I}_{k-1} & \mathbf{0}\\
  \mathbf{0}       & \mathbf{w}_1\\
  \mathbf{0}       & \mathbf{0}
  \end{bmatrix}.
\end{align*}
\end{lemma}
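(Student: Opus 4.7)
The plan is to leverage the (pseudo-)block-diagonal structure of $\mathbf{Y}$ to decompose its SVD blockwise, and then use a singular-value perturbation argument to determine which of the blocks contributes each of the top-$k$ singular pairs. Since $\mathbf{Y}$ has the displayed block form with diagonal blocks $\mathbf{\Sigma}_1$, $\mathbf{\Sigma}_2$, $\mathbf{\Sigma}_3$ and additional zero columns on the right, its singular values are exactly the union of the singular values of the three blocks, and each left singular vector of $\mathbf{Y}$ is the corresponding block left singular vector padded by zeros in the row indices outside that block. This gives the complete list of left singular pairs of $\mathbf{Y}$: $(\sigma_i, \mathbf{e}_i)$ for $i=1,\ldots,k-1$ from $\mathbf{\Sigma}_1$, the two pairs $(\xi_1, [\mathbf{0}^\top, \mathbf{w}_1^\top, \mathbf{0}^\top]^\top)$ and $(\xi_2, [\mathbf{0}^\top, \mathbf{w}_2^\top, \mathbf{0}^\top]^\top)$ from $\mathbf{\Sigma}_2$, and $(\sigma_i, \mathbf{e}_i)$ for $i=k+2,\ldots,d$ from $\mathbf{\Sigma}_3$.

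It then suffices to show that the top-$k$ singular values of $\mathbf{Y}$ are exactly $\{\sigma_1,\ldots,\sigma_{k-1}, \xi_1\}$. Writing $\mathbf{\Sigma}_2 = \mathbf{D} + \eta\bar{\mathbf{a}}\bar{\mathbf{b}}^\top$ with $\mathbf{D}=\text{diag}(\sigma_k,\sigma_{k+1})$ and $\|\eta\bar{\mathbf{a}}\bar{\mathbf{b}}^\top\|_2 = \eta$ (since $\bar{\mathbf{a}}$ and $\bar{\mathbf{b}}$ are unit vectors), Weyl's inequality for singular values yields $\xi_1 \ge \sigma_k - \eta$ and $\xi_2 \le \sigma_{k+1}+\eta$. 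Invoking the Case~2 assumption $\eta < \sigma_k-\sigma_{k+1}$ then gives $\xi_1 > \sigma_{k+1} \ge \sigma_j$ for every $j \ge k+2$, so $\xi_1$ strictly dominates every singular value coming from $\mathbf{\Sigma}_3$; dually, $\xi_2 < \sigma_k \le \sigma_j$ for every $j \le k-1$, so $\xi_2$ is strictly dominated by every singular value coming from $\mathbf{\Sigma}_1$. Hence the top-$k$ singular values of $\mathbf{Y}$ are precisely $\sigma_1,\ldots,\sigma_{k-1}$ together with $\xi_1$, and the associated left singular vectors are exactly the columns of the displayed matrix (possibly up to column reordering, which leaves the $k$-dimensional subspace $g_k(\mathbf{Y})$ unchanged and is absorbed into the freedom of choosing an orthonormal basis).

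The only real obstacle is establishing the cross-block ordering of singular values, and this is handled in a single stroke by Weyl's inequality applied to the $2\times 2$ perturbed block $\mathbf{\Sigma}_2$ together with the Case~2 hypothesis; no finer estimates on $\varphi$ or on the exact values of $\xi_1,\xi_2$ are required.
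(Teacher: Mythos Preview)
Your proposal is correct and follows essentially the same approach as the paper: both arguments exploit the block-diagonal structure of $\mathbf{Y}$ and then use a singular-value perturbation bound on the $2\times 2$ block $\mathbf{\Sigma}_2$ to obtain $\xi_1 > \sigma_{k+1}$ and $\xi_2 < \sigma_k$, from which the identification of the top-$k$ left singular vectors is immediate. The only cosmetic difference is that the paper cites a rank-one perturbation/interlacing result directly for those two inequalities, whereas you derive them from Weyl's inequality together with the Case~2 hypothesis $\eta < \sigma_k - \sigma_{k+1}$; since the lemma is stated and used only under Case~2, this distinction is immaterial.
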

\begin{proof}
According to the perturbation theory~\cite{pertRank}, the singular values of $\mathbf{\Sigma}_2$ must satisfy
$$\xi_2<\sigma_{k}, \quad \xi_1>\sigma_{k+1}.$$
It indicates that $\xi_1>\sigma_k$, $\xi_2>\sigma_k$ and $\xi_1<\sigma_{k+1}$,  $\xi_2<\sigma_{k+1}$ will not happen. Hence, we will select the eigenvector corresponding to singular value $\xi_1$ as one of the leading $k$ principal components, which completes the proof.
\end{proof}

Since one set of orthonormal bases for $g_k(\mathbf{\Sigma})$ is $[\mathbf{I}_{k},\mathbf{0}]^\top$, then the subspace distance $\theta\big(g_k(\mathbf{\Sigma}),g_k(\mathbf{Y})\big)$ is determined by the singular values of 
\begin{align*}
    \begin{bmatrix}
    \mathbf{I}_k\\ 
    \\
    \mathbf{0}
    \end{bmatrix}^\top
    \cdot 
    \begin{bmatrix}
    \mathbf{I}_{k-1}&\mathbf{0}\\
    \mathbf{0}&\mathbf{w}_1\\
    \mathbf{0}&\mathbf{0}
    \end{bmatrix} 
    =\text{diag}(1,1,\cdots,\cos \varphi).
\end{align*}
Hence, the subspace distance is $\arccos(|\cos \varphi|)$ and our optimization problem can be equivalently formulated as 
\begin{align}
    \min_{\alpha\in [0,2\pi),\beta\in [0,2\pi)}\quad & |\cos\varphi|.
    \label{opt:origi}
\end{align}
Let $\mathbf{Z} = \mathbf{\Sigma}_2\mathbf{\Sigma}_2^\top$, we can compute $\mathbf{W}$ through eigenvalue decomposition of $\mathbf{Z}$.
According to the equality $\mathbf{\Sigma}_2\mathbf{\Sigma}_2^\top=\mathbf{W}\cdot\text{diag}(\xi_1^2,\xi_2^2)\cdot\mathbf{W}^\top$, we have
\begin{align*}
    \mathbf{Z} = &\begin{bmatrix}
    Z_{1,1} & Z_{1,2}\\
    Z_{2,1} & Z_{2,2}
    \end{bmatrix}\\ 
    =&
    \begin{bmatrix}
     \xi_1^2 \cos^2\varphi+\xi_2^2 \sin^2\varphi & (\xi_1^2-\xi_2^2)\cos\varphi\sin\varphi \\
     (\xi_1^2-\xi_2^2)\cos\varphi\sin\varphi &
     \xi_1^2\sin^2\varphi+\xi_2^2\cos^2\varphi
    \end{bmatrix}.\\
\end{align*}
From this equation, we obtain
\begin{align*}
    \begin{cases}
      \cos(2\varphi)(\xi_1^2-\xi_2^2) = Z_{1,1} - Z_{2,2} \\
      \sin(2\varphi)(\xi_1^2-\xi_2^2) = Z_{1,2} + Z_{2,1}
    \end{cases}.
\end{align*}
Then we can compute $\varphi$ through
\begin{equation}
\varphi =0.5\text{atan2}(a_y,a_x),
\label{eq:varphi}
\end{equation}
where $\text{atan2}(\cdot,\cdot)$ is the four-quadrant inverse tangent function, $a_x=Z_{1,1}-Z_{2,2}$, and $a_y=Z_{1,2}+Z_{2,1}$. In our case, the specific expressions of $a_x$ and $a_y$ are
\begin{align}
    \begin{cases}
     a_x =&\sigma_k^2 - \sigma_{k+1}^2 + 2\sigma_k\eta\cos(\alpha)\cos(\beta)\\
     &-2\sigma_{k+1}\eta\sin(\alpha)\sin(\beta)+\eta^2\cos(2\alpha),  \\
     a_y =& 2\eta\Big(\sigma_k\sin(\alpha)\cos(\beta)+\sigma_{k+1}\cos(\alpha)\sin(\beta)\\
     &+\eta\cos(\alpha)\sin(\alpha)\Big).
    \end{cases}
    \label{equ:axay}
\end{align}

Let us write $a_x$ and $a_y$ as a function of $\alpha$ and $\beta$: $a_x=a_x(\alpha,\beta)$ and $a_y=a_y(\alpha,\beta)$. To further restrict the domains of $\alpha$ and $\beta$, we analyze the properties of the angle $\varphi$ in~\eqref{eq:varphi} as a function of $\alpha$ and $\beta$. First, we have $a_x(\alpha,\beta) = a_x(\pi+\alpha,\pi+\beta)$ and $a_y(\alpha,\beta)=a_y(\pi+\alpha,\pi+\beta)$. So $\varphi(\alpha,\beta)=\varphi(\pi+\alpha,\pi+\beta)$. This property indicates that we only need to consider the function value in the domain $\alpha \in [0,\pi], \beta \in [-\pi,\pi]$. Second, $a_x(\alpha,\beta)=a_x(\pi-\alpha,\pi-\beta)$ and $a_y(\alpha,\beta)=-a_y(\pi-\alpha,\pi-\beta)$, and then we have $\varphi(\alpha,\beta)=-\varphi(\pi-\alpha,\pi-\beta)$. Since $\cos(\varphi)$ is an even function, we only need to consider the function with domain $\alpha \in [0,\pi/2],\beta \in [-\pi,\pi]$. Note that $\mathbf{\Sigma}_2$ is in the form of~\eqref{Sigma2}, the variance in the direction of $\mathbf{e}_k$ is $v_k = \cos(\alpha)^2+\sigma_k^2+2\cos(\alpha)\cos(\beta)$, and the variance in the direction of $\mathbf{e}_{k+1}$ is $v_{k+1} =\sin(\alpha)^2+\sigma_{k+1}^2+2\sin(\alpha)\sin(\beta)$. 
To maximize the subspace distance, we should make $v_k$ small and make $v_{k+1}$ large. Apparently, the sign of $\cos(\alpha)\cos(\beta)$ should be negative and the sign of $\sin(\alpha)\sin(\beta)$ should be positive. Hence the optimal $\alpha$ and $\beta$ should satisfy $\alpha \in [0,\pi/2]$ and $\beta \in [\pi/2,\pi]$. As a result, the optimization problem~\eqref{opt:origi} can be written as
\begin{equation}
    \min_{\alpha\in[0,\pi/2],\beta\in[\pi/2,\pi]}:\, |\cos\left(\varphi(\alpha,\beta)\right)|.
    \label{opt:theta2}
\end{equation}

The following theorem characterizes the optimal solution to problem~\eqref{opt:theta2}. 
\begin{theorem}\label{thm:sol-k<r}
The optimal solution to problem~\eqref{opt:theta2} is
\begin{align}
    \begin{cases}
    \alpha^*&=\arccos\left(\sqrt{\frac{\sigma_k^2-\sigma_{k+1}^2+\eta^2 - \sqrt{H}}{2(\sigma_k^2-\sigma_{k+1}^2)}}\right), \\
    \beta^* &=\arccos\left(- \sqrt{\frac{\sigma_k^2-\sigma_{k+1}^2+\eta^2+ \sqrt{H}}{2(\sigma_k^2-\sigma_{k+1}^2)}}\right),
    \end{cases}
    \label{opt:albe}
\end{align}
where $H = \sigma_k^4+\sigma_{k+1}^4+\eta^4-2\sigma_k^2\sigma_{k+1}^2-2\sigma_k^2\eta^2-2\sigma_{k+1}^2\eta^2$.
\end{theorem}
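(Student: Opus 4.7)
The plan is to reduce the two-variable problem~\eqref{opt:theta2} to two independent one-dimensional problems via a complex-variable reformulation. First I observe that $\cos^2\varphi = \tfrac12(1+\cos 2\varphi)$, so minimizing $|\cos\varphi|$ is equivalent to minimizing $\cos(2\varphi)=a_x/\sqrt{a_x^2+a_y^2}$. The crucial algebraic identity is the complex factorization
\begin{equation*}
a_x+ia_y \;=\; \bigl(\eta e^{i\alpha}+(\sigma_k-\sigma_{k+1})e^{-i\beta}\bigr)\bigl(\eta e^{i\alpha}+(\sigma_k+\sigma_{k+1})e^{i\beta}\bigr) \;=:\; A\cdot B,
\end{equation*}
which is verified by multiplying out and comparing with~\eqref{equ:axay}. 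Consequently $\cos(2\varphi)=\cos(\arg A+\arg B)$.

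I then substitute $s=(\alpha+\beta)/2$ and $t=(\alpha-\beta)/2$, so that $A=e^{it}[\eta e^{is}+(\sigma_k-\sigma_{k+1})e^{-is}]$ and $B=e^{is}[\eta e^{it}+(\sigma_k+\sigma_{k+1})e^{-it}]$. Adding the arguments yields $\arg(AB)=F(s)+G(t)$ with $F(s)=s+\arg(\eta e^{is}+(\sigma_k-\sigma_{k+1})e^{-is})$ and $G(t)=t+\arg(\eta e^{it}+(\sigma_k+\sigma_{k+1})e^{-it})$, a separable sum. A direct $\mathrm{atan2}$-derivative computation gives the stationary points $\sin^2 s^*=\tfrac12+\tfrac{\eta}{2(\sigma_k-\sigma_{k+1})}$ and $\sin^2|t^*|=\tfrac12+\tfrac{\eta}{2(\sigma_k+\sigma_{k+1})}$, both lying in $(\pi/4,\pi/2)$ under the Case~2 hypothesis $\eta<\sigma_k-\sigma_{k+1}$. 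The antisymmetry $F(\pi-s)=-F(s)$ and the oddness $G(-t)=-G(t)$ then show that the minimum of $F$ on $[\pi/4,3\pi/4]$ is attained at $\pi-s^*$ and the minimum of $G$ on $[-\pi/2,0]$ is attained at $t^*<0$; since $G\le 0$ on its interval, the combined minimum of $F(s)+G(t)$ is at $(s,t)=(\pi-s^*,t^*)$.

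Translating back via $\alpha^*=s+t$ and $\beta^*=s-t$, the sum-of-angles identities give $\cos\alpha^*=\sin s^*\sin|t^*|-\cos s^*\cos|t^*|$ and $\cos\beta^*=-\sin s^*\sin|t^*|-\cos s^*\cos|t^*|$. Squaring and substituting the explicit values of $\cos^2 s^*,\sin^2 s^*,\cos^2|t^*|,\sin^2|t^*|$, the mixed product $\sin s^*\sin|t^*|\cos s^*\cos|t^*|$ collapses to $\sqrt H/(4(\sigma_k^2-\sigma_{k+1}^2))$, which identifies exactly the quantity $H$ of the theorem statement and delivers the formulas in~\eqref{opt:albe}. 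Feasibility of $(\alpha^*,\beta^*)$ in $[0,\pi/2]\times[\pi/2,\pi]$ reduces to the elementary inequalities $\pi/4<s^*,|t^*|<\pi/2$ and $s^*>|t^*|$, all of which hold whenever $\eta<\sigma_k-\sigma_{k+1}$.

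The main obstacle is spotting the factorization $a_x+ia_y=A\cdot B$; with it, the separable structure makes the rest essentially mechanical. A brute-force alternative is to set the partials $a_y\partial_\alpha a_x=a_x\partial_\alpha a_y$ and $a_y\partial_\beta a_x=a_x\partial_\beta a_y$ and eliminate the ratio $a_x/a_y$, reducing the resulting trigonometric system to a quadratic in $\cos^2\alpha$ whose roots are precisely~\eqref{opt:albe}; the derivation works but is noticeably heavier in bookkeeping.
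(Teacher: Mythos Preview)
Your approach is correct and genuinely different from the paper's.  The paper proceeds by brute force: it writes out $\partial\varphi/\partial\alpha$ and $\partial\varphi/\partial\beta$ explicitly, eliminates $\sin\alpha\sin\beta$ between the two stationarity equations to obtain a relation of the form $C\cos^2\alpha+D\cos\alpha\cos\beta+C\cos^2\beta+F=0$, combines this with the second equation to produce a quadratic in $r=\cos\alpha\cos\beta$, keeps the admissible root $r_1=-\eta\sigma_k/(\sigma_k^2-\sigma_{k+1}^2)$, and then substitutes back to get a quadratic in $\cos^2\alpha$ whose roots are the two expressions under the radicals in~\eqref{opt:albe}.  The boundary comparison is handled in one line at the end.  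This is exactly the ``brute-force alternative'' you describe in your last paragraph.

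Your complex factorization $a_x+ia_y=\bigl(\eta e^{i\alpha}+(\sigma_k-\sigma_{k+1})e^{-i\beta}\bigr)\bigl(\eta e^{i\alpha}+(\sigma_k+\sigma_{k+1})e^{i\beta}\bigr)$ is the real insight; it explains \emph{why} the stationary system decouples and why $H=(\mu^2-\eta^2)(\nu^2-\eta^2)$ with $\mu=\sigma_k-\sigma_{k+1}$, $\nu=\sigma_k+\sigma_{k+1}$ appears so naturally.  After the change $(\alpha,\beta)\mapsto(s,t)$ the stationarity conditions $F'(s)=0$, $G'(t)=0$ are one-line computations, and the translation back via $\cos\alpha^*=-\cos(s^*+|t^*|)$, $\cos\beta^*=-\cos(s^*-|t^*|)$ immediately yields~\eqref{opt:albe}.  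The paper's elimination recovers the same roots but gives no structural reason for them.

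One point to tighten: you argue by ``the combined minimum of $F(s)+G(t)$'', but the objective is $\cos\bigl(F(s)+G(t)\bigr)$, not $F+G$ itself, and the image of $[0,\pi/2]\times[\pi/2,\pi]$ under $(\alpha,\beta)\mapsto(s,t)$ is a rotated square rather than the product $[\pi/4,3\pi/4]\times[-\pi/2,0]$.  What you actually need (and essentially have) is that the interior stationary points of $\cos(F+G)$ with $\sin(F+G)\neq 0$ are precisely the pairs $(s,t)$ with $F'(s)=G'(t)=0$; you locate the two such points, verify the one at $(\pi-s^*,t^*)$ is feasible, and compare with the boundary.  Phrasing it this way closes the gap without extra work, and matches the level of rigor in the paper's own treatment of the boundary.
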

\begin{proof} 
Please see Appendix~\ref{app:sol-kkt1}.
\end{proof}

Accordingly, the optimal solution to  problem~\eqref{opt:general-original} is
\begin{align}
    \mathbf{a}^*&=\eta\cos(\alpha^*)\mathbf{u}_k+\eta\sin(\alpha^*)\mathbf{u}_{k+1},\label{sol:genera-a}\\
    \mathbf{b}^*&=\cos(\beta^*)\mathbf{v}_k+\sin(\beta^*)\mathbf{v}_{k+1}.\label{sol:genera-b}
\end{align}
Furthermore, the optimal subspace distance $\theta^*$ can be computed according to~\eqref{equ:axay} and~\eqref{eq:varphi}. Moreover, according to the properties of the function $\varphi(\alpha,\beta)$ we have discussed before, there are other three optimal solutions
$$(-\alpha^*, -\beta^*),\quad (\pi-\alpha^*,\pi-\beta^*),\quad (\alpha^*-\pi,\beta^*-\pi),$$
which lead to the same optimal objective value.


\section{Optimal adversarial strategy without the rank constraint}\label{sec:general}

Using the insights gained from Section~\ref{sec:rankone}, we now characterize the optimal attack strategy in the general case without the rank-one constraint by solving~\eqref{opt:generaloriginal}. We will directly consider the general case with $k\leq \text{rank}(\mathbf{X})$. 

Following the similar transformation from~\eqref{opt:low_rank_1} to \eqref{opt:low_rank_3}, we can simplify the optimization problem~\eqref{opt:generaloriginal} as
\begin{align}
    \max_{\mathbf{B}\in \mathbb{R}^{d\times n}}:\quad & \theta\big(g_k(\mathbf{\Sigma}),g_k(\mathbf{Y})\big)
    \label{opt:no-rank-simplified} \\ \nonumber
    \text{s.t.}\quad & \mathbf{Y} = \mathbf{\Sigma} + \mathbf{B}, \\\nonumber
    &\|\mathbf{B}\|_{\text{F}} \le \eta,
\end{align}
where without loss of generality we assume $ d\le n$, the full SVD of the data matrix is $\mathbf{X} = \mathbf{U}\mathbf{\Sigma}\mathbf{V}^\top$, the singular values of the data matrix are $\{\sigma_1,\sigma_2,\cdots,\sigma_d\}$, and $\mathbf{B} = \mathbf{U}^\top  \mathbf{\Delta X} \mathbf{V}$. To identify the optimal modification matrix $\mathbf{B}$ in problem~\eqref{opt:no-rank-simplified}, we divide it into two cases.\\
\textbf{Case 1:} When $\eta \ge  \frac{\sigma_k - \sigma_{k+1}}{\sqrt{2}}$, by setting $b_{k,k}=-\eta/\sqrt{2}$, $b_{k+1,k+1} = \eta/\sqrt{2}$, and all other entries of $\mathbf{B}$ to zero, where $b_{i,j}$ is the element in the $i$th row and $j$th column of $\mathbf{B}$, this will lead to the maximal subspace distance, $\pi/2$. \\
\textbf{Case 2:} When $\eta <\frac{\sigma_k - \sigma_{k+1}}{\sqrt{2}}$, the following theorem states the form of the optimal $\mathbf{B}$. 
\begin{theorem} \label{theorem:no-rank-optimality}
The optimal $\mathbf{B}$ to problem~\eqref{opt:no-rank-simplified} has only four possible non-zero entries: $b_{k,k}, b_{k,k+1}, b_{k+1,k}$ and $b_{k+1,k+1}$. 
\end{theorem}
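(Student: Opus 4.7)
The plan is to prove Theorem~\ref{theorem:no-rank-optimality} by a structural-reduction argument: starting from any candidate optimizer $\mathbf{B}$, I will argue that every entry outside the four positions $\{k,k+1\}\times\{k,k+1\}$ can be set to zero without decreasing $\theta$. The first step is to note that $g_k(\mathbf{Y})$ is the top-$k$ eigenspace of $\mathbf{M}:=\mathbf{Y}\mathbf{Y}^\top = \mathbf{\Sigma}\mathbf{\Sigma}^\top + \mathbf{\Sigma}\mathbf{B}^\top + \mathbf{B}\mathbf{\Sigma}^\top + \mathbf{B}\mathbf{B}^\top$, so $\theta$ depends on $\mathbf{B}$ only through $\mathbf{M}$. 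I would partition $\mathbf{M}$ into $2\times 2$ blocks according to the index split $\{1,\dots,k\}$ versus $\{k+1,\dots,d\}$, so that the off-diagonal block $\mathbf{M}_{12}$ is the key coupling driving the rotation of the top-$k$ subspace and the ``gain'' it offers is controlled inversely by the spectral gap between the two diagonal blocks of $\mathbf{M}$.

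I would then carry out an entry-by-entry case analysis of the non-zero entries $b_{i,j}$ of $\mathbf{B}$. If $(i,j)$ lies in one of the diagonal blocks (so $i,j\le k$ or $i,j\ge k+1$) but outside the target $2\times 2$ sub-block, the entry only perturbs $\mathbf{M}_{11}$ or $\mathbf{M}_{22}$ and does not contribute to $\mathbf{M}_{12}$ at first order; the energy is wasted and can be redirected into the sub-block to strictly increase $\|\mathbf{M}_{12}\|$ and shrink the diagonal-block eigenvalue gap. If $(i,j)$ is in an off-diagonal position but outside $\{k,k+1\}\times\{k,k+1\}$, then $b_{i,j}$ does contribute linearly to $\mathbf{M}_{12}$, but with a lever arm controlled by $1/(\sigma_i^2-\sigma_j^2)$, which is strictly smaller than the lever arm $1/(\sigma_k^2-\sigma_{k+1}^2)$ available at the positions $(k,k+1)$ and $(k+1,k)$, so an equal-energy transfer into the sub-block strictly increases $\theta$. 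Entries in columns $j>d$ contribute only through the quadratic term $\mathbf{B}\mathbf{B}^\top$ and can be absorbed into an equivalent within-block perturbation producing the same outer product on the relevant rows.

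The main obstacle will be making the lever-arm comparison rigorous beyond first-order perturbation theory, because the quadratic term $\mathbf{B}\mathbf{B}^\top$ couples all entries of $\mathbf{B}$ and $\theta$ is non-convex. The bound $\eta<(\sigma_k-\sigma_{k+1})/\sqrt{2}$ is precisely the threshold below which even the most efficient equal-split perturbation ($b_{k,k}=-\eta/\sqrt{2}$, $b_{k+1,k+1}=\eta/\sqrt{2}$) cannot close the gap between the $k$th and $(k+1)$th singular values of $\mathbf{Y}$; within this regime, the spectral gap of $\mathbf{M}$ stays strictly positive for every feasible $\mathbf{B}$, so the top-$k$ eigenspace remains a smooth function of $\mathbf{B}$ and the block-decoupling formulas remain well-defined. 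This lets the pairwise rearrangement comparisons hold to all orders; iterating them exhausts every entry outside the $2\times 2$ sub-block and yields the claimed support, completing the proof.
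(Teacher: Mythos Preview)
Your proposal has a genuine gap. The lever-arm comparison you invoke---that an off-diagonal entry $b_{i,j}$ rotates eigenvectors with strength proportional to $1/(\sigma_i^2-\sigma_j^2)$---is a first-order eigenvector perturbation statement. You yourself flag this as the main obstacle, and then resolve it by saying that because $\eta<(\sigma_k-\sigma_{k+1})/\sqrt{2}$ keeps the spectral gap open, the top-$k$ eigenspace is smooth and ``the pairwise rearrangement comparisons hold to all orders.'' That inference is not valid: smoothness guarantees that the eigenspace is a differentiable function of $\mathbf{B}$, but it does not guarantee that the \emph{sign} of the first-order comparison persists for finite $\eta$. The quadratic term $\mathbf{B}\mathbf{B}^\top$ mixes all entries, and nothing you have written rules out that a configuration with mass outside the $2\times 2$ block could beat the concentrated one once second-order effects are accounted for. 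Your treatment of columns $j>d$ (``absorbed into an equivalent within-block perturbation producing the same outer product'') is similarly a gesture, not an argument.

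The paper avoids perturbation theory entirely. Its proof has two steps: first, a geometric argument (as in the rank-one case) shows that the optimal $\mathbf{q}$ lies in $\mathrm{span}[\mathbf{e}_k,\mathbf{e}_{k+1}]$, which restricts $\mathbf{B}$ to rows $k$ and $k+1$. Second, with that restriction in hand, the Asimov angle is written \emph{exactly} as $\tfrac{1}{2}|\mathrm{atan2}(b_y,b_x)|$ for explicit quadratic forms $b_x,b_y$ in the entries of those two rows. The condition $\eta<(\sigma_k-\sigma_{k+1})/\sqrt{2}$ is used to show $b_x>0$, so maximizing the angle reduces to maximizing $b_y/b_x$. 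Then a direct inequality (Cauchy--Schwarz type) shows that any feasible solution with nonzero entries outside columns $k,k+1$ is dominated by one that moves that energy into $b_{k,k+1}$ and $b_{k+1,k+1}$. This is an exact, all-orders comparison, not an infinitesimal one, and that is precisely what your sketch is missing.
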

\begin{proof}
Please see Appendix \ref{app:thm3}.
\end{proof}
This characterization reduces the complexity of problem~\eqref{opt:no-rank-simplified}. Using this optimal form of $\mathbf{B}$ and following similar steps leading to~\eqref{eq:varphi}, we can write the subspace distance as 
\begin{align}
    \theta =0.5\left|\text{atan2}(b_y,b_x)\right|,
    \label{eq:no-rank-theta}
\end{align}
where 
\begin{align*}
    b_y &= 2\big((b_{k,k}+\sigma_k)b_{k+1,k}+(b_{k+1,k+1}+\sigma_{k+1})b_{k,k+1}\big), \\
    b_x &= (b_{k,k}+\sigma_k)^2+b_{k,k+1}^2 - (b_{k+1,k+1}+\sigma_{k+1})^2 - b_{k+1,k}^2.
\end{align*}
It is easy to see that we can change the sign of $b_y$ by changing the signs of $b_{k,k+1}$ and $b_{k+1,k}$. We also have $b_x>0$, as 
\begin{align*}
    &\frac{b_x}{\|[b_{k,k}+\sigma_k,\, b_{k,k+1}]\|+\|[b_{k+1,k+1}+\sigma_{k+1},\, b_{k+1,k}]\|}\\
    &= \|[b_{k,k}+\sigma_k,\, b_{k,k+1}]\| -  \|[b_{k+1,k+1}+\sigma_{k+1},\, b_{k+1,k}]\| \\
    &\ge \sigma_k -\sigma_{k+1} - \|[b_{k,k},\,b_{k,k+1}]\| - \|[b_{k+1,k},\,b_{k+1,k+1}]\|\\
    &\ge \sigma_k - \sigma_{k+1} - \sqrt{2}\eta >0.
\end{align*}
Using these two facts and the fact that $\text{atan2}(b_y,b_x)$ is an odd function of $b_y$ when $b_x >0$,  
we know that maximizing $\theta$ in~\eqref{eq:no-rank-theta} is 
equivalent to maximizing $b_y/b_x$. Hence, our optimization problem can be written as 
\begin{align} \label{opt:feasible0}
    \max_{\mathbf{u}}:\quad & \frac{\mathbf{u^\top}\mathbf{A}_1\mathbf{u}}{\mathbf{u}^\top\mathbf{A}_2\mathbf{u}} \\ \nonumber
    \text{s.t.}\quad & \|\mathbf{u} -\boldsymbol{\sigma}\|^2 \le \eta^2,
\end{align}
where $\mathbf{u} \triangleq \bar{\mathbf{b}}+\boldsymbol{\sigma}$ with
$\bar{\mathbf{b}} = [b_{k,k}, b_{k+1,k},b_{k,k+1},b_{k+1,k+1}]^\top$ and
$\boldsymbol{\sigma}=[\sigma_k,0,0,\sigma_{k+1}]^\top$, 
\begin{align*}
    \mathbf{A}_1 = 
    \begin{bmatrix}
    0&1&0&0\\
    1&0&0&0\\
    0&0&0&1\\
    0&0&1&0
    \end{bmatrix}\text{,\quad and} \quad 
    \mathbf{A}_2 = 
    \begin{bmatrix}
    1&0&0&0 \\
    0&-1&0&0 \\
    0&0&1&0\\
    0&0&0&-1
    \end{bmatrix}.
\end{align*}

The objective function is the ratio of two quadratic functions. It is a non-convex problem in general. In the following, we transform this problem into a feasibility problem and obtain the closed-form solution analytically. 

Let $\lambda$ denote the value of the objective function in~\eqref{opt:feasible0}. We can rewrite the optimization problem~\eqref{opt:feasible0} as
\begin{align} \nonumber
    \max_{\lambda, \mathbf{u}}:\quad & \lambda\\
    \text{s.t.}\quad & \frac{\mathbf{u^\top}\mathbf{A}_1\mathbf{u}}{\mathbf{u}^\top\mathbf{A}_2\mathbf{u}} = \lambda 
    \label{opt:no-rank-first-constrain},\\ \nonumber
    &  \|\mathbf{u} -\boldsymbol{\sigma}\|^2 \le \eta^2.
\end{align}
The first constraint can be written as $\mathbf{u}^\top(\mathbf{A}_1-\lambda\mathbf{A}_2)\mathbf{u} = 0$, where 
\begin{align*}
    \begin{bmatrix}
    \mathbf{Q} & \mathbf{0} \\
    \mathbf{0} & \mathbf{Q}
    \end{bmatrix} 
    \triangleq
    \mathbf{A}_1 -\lambda \mathbf{A}_2 = 
    \begin{bmatrix}
    -\lambda & 1 & 0 & 0 \\
    1  & \lambda & 0 & 0 \\
    0  & 0 & -\lambda & 1 \\
    0  & 0 & 1 & \lambda 
    \end{bmatrix}.
\end{align*}
To further simplify the constraint, we perform eigenvalue decomposition on 
$
    \mathbf{Q} = \mathbf{P}
    \mathbf{\Lambda}
    \mathbf{P}^\top, 
$
where $\mathbf{\Lambda} = \text{diag}(\sqrt{\lambda^2+1}, - \sqrt{\lambda^2+1})$ and
\begin{align}
    \mathbf{P} = t
    \begin{bmatrix}
    1 &  -(\sqrt{\lambda^2+1}+\lambda)\\
    \sqrt{\lambda^2+1} + \lambda & 1
    \end{bmatrix},
    \label{eq:no-rank-P}
\end{align}
with $t = 1/\sqrt{(\sqrt{\lambda^2+1}+\lambda)^2+1}$. 

We further perform variable change $\mathbf{v} \triangleq \text{diag}(\mathbf{P}^\top, \mathbf{P}^\top )\mathbf{u}$. Thus, the constraint~\eqref{opt:no-rank-first-constrain} is equivalent to $\mathbf{v}^\top \mathbf{\Lambda}\mathbf{v}=0$, which indicates 
$v_1^2 + v_3^2 = v_2^2 + v_4^2$. With this, the optimization problem is simplified as 
\begin{align}
    \max_{\lambda, \mathbf{v}}:\quad & \lambda \label{opt:no-rank-feasibility} \\
    \text{s.t.}\quad & v_1^2 + v_3^2 = v_2^2 + v_4^2 \label{opt:no-rank-f2},\\
    & \|\mathbf{v} - \bar{\boldsymbol{\sigma}}\|^2 \le \eta^2,
    \label{opt:no-rank-f3}
\end{align}
where $\bar{\boldsymbol{\sigma}} = \text{diag}(\mathbf{P}^\top, \mathbf{P}^\top ) \boldsymbol{\sigma} =[p_{1,1}\sigma_k,\, p_{1,2}\sigma_k,\, p_{2,1}\sigma_{k+1},\, p_{2,2}\sigma_{k+1}]^\top.$ Note that $p_{1,2} = - p_{2,1}$ and $p_{2,2} = p_{1,1}$, we have $\bar{\boldsymbol{\sigma}} = [p_{1,1}\sigma_k,\, -p_{2,1}\sigma_k, \,p_{2,1}\sigma_{k+1}, \,p_{1,1}\sigma_{k+1}]^\top.$ 

Now, problem~\eqref{opt:no-rank-feasibility} can be solved by checking the feasibility of \eqref{opt:no-rank-f2} and \eqref{opt:no-rank-f3} given a particular $\lambda$. Given $\lambda$, the feasibility of problem\eqref{opt:no-rank-feasibility} is equivalent to the feasibility of 
\begin{align}
    \min_{v_1^2+ v_3^2 = v_2^2+ v_4^2} \|\mathbf{v} - \bar{\boldsymbol{\sigma}}\|^2 \le \eta^2. \label{ineq:feasibility}
\end{align}
Note that $\bar{\boldsymbol{\sigma}}$ depends on $\lambda$, we denote the left hand side of inequality~\eqref{ineq:feasibility} as $f(\mathbf{v},\lambda) = \|\mathbf{v} - \bar{\boldsymbol{\sigma}}\|^2$ and parametrize $\mathbf{v}$ as 
\begin{align}
       v_1 = r \cos(\alpha) , 
    v_2 = r \cos(\beta) , 
    v_3 = r \sin(\alpha) ,
    v_4 = r \sin(\beta). 
    \label{eq:no-rank-x}  
\end{align}
It is easy to verify that the minimum point of $f(\mathbf{v},\lambda)$ in terms of $\mathbf{v}$ is obtained at the following stationary point
\begin{align} 
    \begin{cases}
     r = \frac{1}{2}\left(\sqrt{p_{1,1}^2\sigma_k^2+p_{2,1}^2\sigma_{k+1}^2} + \sqrt{p_{2,1}^2\sigma_k^2+p_{1,1}^2\sigma_{k+1}^2}\right), \\
     \cos(\alpha) = p_{1,1}\sigma_k/\sqrt{p_{1,1}^2\sigma_{k}^2+p_{2,1}^2\sigma_{k+1}^2}, \\
    \sin(\alpha) = p_{2,1}\sigma_{k+1}/\sqrt{p_{1,1}^2\sigma_{k}^2+p_{2,1}^2\sigma_{k+1}^2},\\
    \cos(\beta) =-p_{2,1}\sigma_k/\sqrt{p_{1,1}^2\sigma_{k+1}^2+p_{2,1}^2\sigma_k^2} ,\\
    \sin(\beta) =p_{1,1}\sigma_{k+1} /\sqrt{p_{1,1}^2\sigma_{k+1}^2+p_{2,1}^2\sigma_k^2}. 
    \end{cases}
    \label{opt:sol-rab}
\end{align}
Plug the optimal $r$, $\alpha$, $\beta$ of~\eqref{opt:sol-rab} into $f(\mathbf{v},\lambda)$, and we have
\begin{align*}\nonumber
  f(\lambda) &\triangleq \min_{v_1^2+v_3^2=v_2^2+v_4^2}f(\mathbf{v},\lambda) \\\nonumber
  & = (\sigma_k^2+\sigma_{k+1}^2)/2\\ 
  &\quad- \sqrt{p_{1,1}^2\sigma_k^2+p_{2,1}^2\sigma_{k+1}^2}\sqrt{p_{2,1}^2\sigma_k^2+p_{1,1}^2\sigma_{k+1}^2}.
\end{align*}
According to inequality~\eqref{ineq:feasibility}, inequality $f(\lambda) \le \eta^2$ now is equivalent to
\begin{align}\nonumber
    \sqrt{p_{1,1}^2\sigma_k^2+p_{2,1}^2\sigma_{k+1}^2}\sqrt{p_{2,1}^2\sigma_k^2+p_{1,1}^2\sigma_{k+1}^2} \\\ge (\sigma_k^2+\sigma_{k+1}^2)/2 - \eta^2. 
    \label{ineq:t1}
\end{align}
Denote the right hand of the above inequality as $c\triangleq (\sigma_k^2+\sigma_{k+1}^2)/2 - \eta^2$. Since $\eta < (\sigma_k - \sigma_{k+1})/\sqrt{2}$, we have $c > \sigma_k\sigma_{k+1}$. Furthermore, we notice that $p_{1,1}^2 = 1- p_{2,1}^2$. Plug it into  inequality~\eqref{ineq:t1}, and we have
\begin{align}
    p_{2,1}^4 - p_{2,1}^2 + \frac{c^2 - \sigma_k^2\sigma_{k+1}^2}{(\sigma_k^2-\sigma_{k+1}^2)^2} \le 0.
    \label{ineq:p}
\end{align}
Let 
\begin{align}
    w \triangleq \frac{c^2 - \sigma_k^2\sigma_{k+1}^2}{(\sigma_k^2-\sigma_{k+1}^2)^2},
    \label{eq:no-rank-w}
\end{align} 
and since $\sigma_k\sigma_{k+1} < c \le (\sigma_k^2+\sigma_{k+1}^2)/2$, we have $0 < w \le \frac{(\sigma_k^2 + \sigma_{k+1}^2)^2/4 - \sigma_k^2\sigma_{k+1}^2}{(\sigma_k^2-\sigma_{k+1}^2)^2} = 1/4$. Denote the left hand of inequality~\eqref{ineq:p} as $h(p_{2,1})$, and we have 
\begin{align*}
    h_{\min} &= h(1/\sqrt{2}) = -1/4 + w \le 0, \\
    h(1) &= w >0. 
\end{align*}
Moreover, since $1/\sqrt{2}<  p_{2,1} < 1$, we must have 
\begin{align}
    p_{2,1} \le p_{2,1}^H, 
    \label{ineq:p21}
\end{align}
where $p_{2,1}^H = \sqrt{(1+\sqrt{1-4w})/2}$ is the largest root of $h(p_{2,1}) =0$. Pluging the expressions of $p_{2,1}$ and $p_{2,1}^H$ into \eqref{ineq:p21}, we can get 
\begin{align*}
    \frac{\sqrt{\lambda^2+1}+\lambda}{\sqrt{(\sqrt{\lambda^2+1}+\lambda)^2+1}}\le\sqrt{\frac{1+\sqrt{1-4w}}{2}}. 
\end{align*}
Simplifying this inequality leads to $\lambda \le \frac{e^2 -1 }{2e}$, where 
\begin{align}
e = \sqrt{\frac{1+\sqrt{1-4w}}{1-\sqrt{1-4w}}}.
\label{eq:no-rank-e}
\end{align}
Thus we can conclude that 
\begin{align}
    \lambda_{\max} = \frac{e^2 -1 }{2e}.
    \label{opt:opt-lambda}
\end{align}
Accordingly, the optimal subspace distance in \eqref{opt:generaloriginal} is 
\begin{align}
    \theta^* =\text{atan}(\lambda_{\max} )/2.
    \label{opt:no-rank-theta}
\end{align}
In summary, given energy budget $\eta$, we first compute $w$ according to~\eqref{eq:no-rank-w} and compute $e$ according to~\eqref{eq:no-rank-e}, from which we can get $\lambda_{\max}$ and $\theta^*$ using~\eqref{opt:opt-lambda} and \eqref{opt:no-rank-theta}. Having obtained the optimal $\lambda_{\max}$, we can compute $\mathbf{P}$ in \eqref{eq:no-rank-P} and compute $\mathbf{v}$ using~\eqref{opt:sol-rab} and \eqref{eq:no-rank-x}, and sequentially compute $\mathbf{u}$ and $\bar{\mathbf{b}}$. Finally, if the optimal solution of problem~\eqref{opt:no-rank-simplified} is $\mathbf{B}^*$ with non-zero entries $\bar{\mathbf{b}}^* = [b_{k,k}^*,b_{k+1,k}^*,b_{k,k+1}^*,b_{k+1,k+1}^*]^\top$, we also have another paired feasible optimal solution with non-zero entries being $[b_{k,k}^*,-b_{k+1,k}^*,-b_{k,k+1}^*,b_{k+1,k+1}^*]^\top$,
which leads to the same optimal value. Accordingly, the optimal solution to problem~\eqref{opt:generaloriginal} is $\mathbf{\Delta X}^* = \mathbf{U}\mathbf{B}^*\mathbf{V}^\top$. 



\section{Numerical experiments and applications}\label{sec:num}
In this section, we provide numerical examples to illustrate the results obtained in this paper. We will also apply the results to principal component regression\cite{jackson2005user} to illustrate potential applications in practice. 

\subsection{Numerical experiments}
In this subsection, we illustrate the results with synthesized data.

In the first experiment, we employ different attack strategies in a low-rank data matrix. In this simulation, we set $d=5$, $n=5$, and $k=3$. We generate the original data matrix as $\mathbf{X} = \mathbf{A}\mathbf{B}^\top$, where $\mathbf{A}\in \mathbb{R}^{d \times k}$, $\mathbf{B}\in \mathbb{R}^{n \times k}$, and each entry of $\mathbf{A}$ and $\mathbf{B}$ is i.i.d. generated according to a standard normal distribution. 
First, we conduct our optimal rank-one attack strategy. In this strategy, we use the result from the analysis of the optimal rank-one modification matrix to design $\mathbf a,\mathbf b$ and add the attack matrix $\mathbf{\Delta  X}=\mathbf a\mathbf b^\top$ to the original data matrix $\mathbf X$. We then perform SVD on $\hat{\mathbf X}$ and select the $k$ leading principal components. Finally, we compute the distance between the selected subspace and the original subspace. 
We also conduct a test using a random rank-one attack strategy, in which we randomly generate $\mathbf a, \mathbf b$ with each entry of $\mathbf a, \mathbf b$ being i.i.d. generated according to the standard normal distribution. Then we normalize the energy of $\mathbf a\mathbf b^\top $to be $\eta^2$. For each $\eta$, we repeatedly generate 100000 pairs of $\mathbf a$ and $\mathbf b$ and compute their corresponding subspace distances. 
In addition, we compare it with the strategy where the modification matrix is free of rank constraint. Although our analysis is deliberately designed for general data matrices, we set the ($k+1$)th singular value to be zero so that it can be applied to the low-rank data matrix. We design the modification matrix $\mathbf{\Delta X}$ according to our analysis in this paper and calculate the subspace distance between the original subspace and that after modification. 
Moreover, we conduct another random attack strategy in which we randomly generate the modification matrix without any rank constraint. Each entry of the modification matrix is i.i.d. generated according to a standard normal distribution. After that, we normalize its Frobenious norm equal to $\eta$. We repeat this attack 100000 times for each $\eta$ and record its corresponding subspace distance.  
Furthermore, we also compare it with the strategy described in \cite{pimentel2017adversarial}, which adds one adversarial data sample into the data set. 

\begin{figure}[t!]
    \centering
    \includegraphics[width=.68\linewidth]{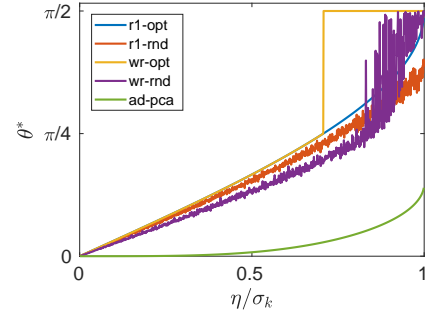}
    \caption{Subspace distances with different attack strategies on a low-rank data matrix over different energy budgets.}
    \label{fig:low-rank}
\end{figure}

Fig.~\ref{fig:low-rank} demonstrates the subspace distances obtained by the five strategies. In this figure, r1-opt represents the rank-one optimal attack obtained in this paper, r1-rnd represents the maximal subspace distance obtained among the 100000 times random rank-one attacks, wr-opt stands for our optimal attack without the rank constraint, wr-rnd is the maximal subspace distance among the 100000 random attacks without the rank constraint, and ad-pca is the algorithm described in \cite{pimentel2017adversarial}. The $x$ axis is the ratio between $\eta$ and the smallest singular value of the original data matrix. From the figure, we  can see our optimal strategies are much better than the ad-pca strategy. It is because our strategies can modify the data matrix, and thus have higher degrees of freedom to manipulate the data. The optimal strategies designed in this paper also have a larger subspace distance compared with their corresponding random attack strategies. In the region where $\eta/\sigma_{k}\in[0,1/\sqrt{2}]$, both of our two optimal strategies provide the same subspace distances, 
which can be verified by setting $\sigma_{k+1}=0$, computing $\theta^*$ in equation~\eqref{opt:no-rank-theta} and comparing it with the value in  equation~\eqref{opt:rank-one-theta}. 
When $\eta/\sigma_k > 1/\sqrt{2}$, the optimal attack without the rank constraint leads to the largest subspace distance, $\pi/2$, which is much larger than the distance obtained by the optimal rank-one attack strategy. That means, without the rank constraint, it indeed provides a larger subspace distance.

In the second numerical experiment, we test these strategies except the ad-pca in the general data matrix in which the data matrix is not low-rank. In this experiment, we set $d=5$, $n=5$, and $k=3$. We randomly generate the data matrix $\mathbf{X}\in\mathbb{R}^{d\times n}$ with each entries i.i.d generated according to a standard normal distribution. We also design the optimal rank-one attack matrix and the optimal modification matrix without the rank constraint according to the analysis provided in this paper. In addition, we do random attacks 100000 times using the randomly generated modification matrix with the rank-one constraint and without the rank constraint respectively.   

\begin{figure}[t!]
    \centering
    \includegraphics[width=.68\linewidth]{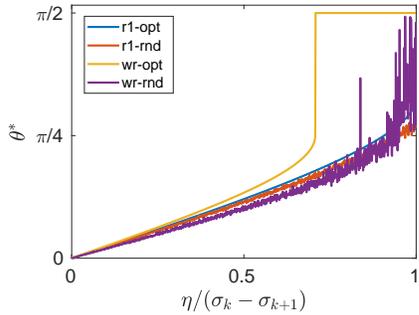}
    \caption{Subspace distances achieved by using different attack strategies under different energy budgets.}
    \label{fig:general}
\end{figure}
Fig. \ref{fig:general} shows the subspace distances obtained through different strategies over different energy budgets. In this figure, the $x$ axis is the ratio between $\eta$ and $\sigma_k - \sigma_{k+1}$. For the two random attack strategies, we demonstrate the maximal subspace distances achieved by the 100000 times random attacks. As the figure shows, both of the two random strategies have smaller subspace distances compared with their perspective optimal strategies. Different from the low-rank case, the strategy without the rank constraint provides larger subspace distances consistently over all the energy budgets.

\subsection{Applications}
In this subsection, we use real data to illustrate the results obtained in this paper.

In particular, we illustrate the impact of adversarial attack on PCR, which is widely used in statistical learning especially when collinearity exists in the data. Ordinary regression will increase the standard error of the coefficients when there are high correlations or even collinearities between features. This happens particularly when the number of features is much larger than the number of data samples. PCR deals with this issue by performing PCA on the feature matrix and only selecting the leading $k$ principal components as the predictors, and thus dramatically decreases the number of predictors.
The regression process of PCR can be seen as projecting the response values onto the subspace spanned by the leading $k$ principal components. So, the accuracy of the subspace will significantly influence the regression results. 
More details of PCR can be found in~\cite{jackson2005user}.  
\begin{figure}
    \centering
    \includegraphics[width=0.69\linewidth]{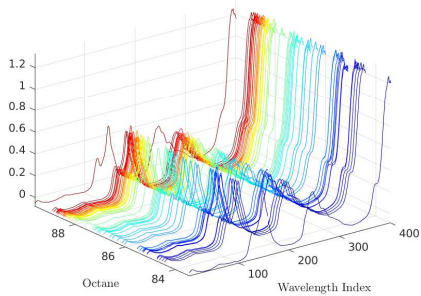}
    \caption{The spectral intensities of the gasaline data set.}
    \label{fig:app-data-description}
\end{figure}

In this experiment, our task is to use the gasoline spectral intensity to predict its octane rating. We use the gasoline spectral data set~\cite{kalivas1997two}, which comprises spectral intensities of 60 samples of gasoline at 401 wavelengths, and their octane ratings. 
Fig.~\ref{fig:app-data-description} shows the spectral intensities of the data set. This figure indicates that the correlation of intensity among different wavelengths is very high. To complete the regression task, we can use PCR. 

In this experiment, we randomly select $80$ percent of the data as the training set and the remaining $20$ percent as the test set. We choose $4$ principal components as our predictors and perform regression based on these principal components. We also record the r-squared values both in the training phase and the test phase. 
The r-squared value is defined as $r^2 = 1- \frac{\|\mathbf{y}-\hat{\mathbf{y}}\|^2}{\|\mathbf{y}-\bar{\mathbf{y}} \|^2}$, where $r^2$ is the r-squared value, $\mathbf{y}$ is the response values, $\hat{\mathbf{y}}$ is the predicted values, $\|\mathbf{y}-\bar{\mathbf{y}}\|^2$ represents the total variance of the response values, and $\bar{\mathbf{y}}=\textrm{mean}(\mathbf{y})\cdot\mathbf{1}$ stands for the mean vector of the response values. R-squared value measures how well the model fits the data and larger r-squared value indicates better regression.
Firstly, we perform regular PCR without attack and let na-train and na-test denote the r-squared values of the training and test respectively. We then attack the feature matrix using the optimal rank-one strategy proposed in this paper with different energies and denote r1-train and r1-test as its r-squared values in the training and test processes. Finally, we also carry out the optimal attack without the rank constraint and denote wr-train, wr-test as the r-squared values in the training and test procedures. 

\begin{figure}[t!]
    \centering
    \includegraphics[width=.68\linewidth]{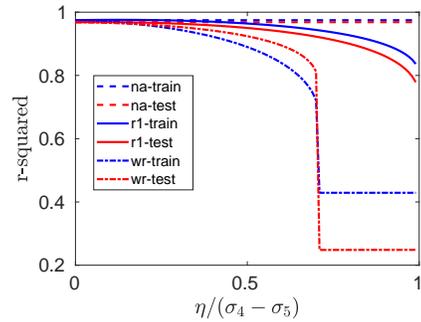}
    \caption{R-squared values with different attack strategies over different energy budgets.} 
    \label{fig:app-pcr-attcks}
\end{figure}

Fig.~\ref{fig:app-pcr-attcks} illustrates the r-squared values with different attack strategies under different energy budgets. As shown in this figure, with the increasing of the energy budget, r-squared values of training and test decrease for both attack strategies. This figure also indicates that the strategy with no rank constraint is more efficient than the rank-one strategy considering its smaller r-squared values. Furthermore, the r-squared value of the strategy without the rank constraint has a tremendous drop at the point $\eta/(\sigma_4-\sigma_5)=1/\sqrt{2}$, which is consistent with our analysis that beyond this particular point the maximal subspace distance is $\pi/2$.

\section{Conclusion}\label{sec:con}
In this paper, we have investigated the adversarial robustness of PCA problem. We have characterized the optimal rank-one adversarial modification strategy and the optimal strategy without the rank constraint to modify the data. Our analysis has showed that both of the two strategies depend on the singular values of the data matrix and the adversary's energy budget. We have also performed numerical simulations and investigated the impact of this attack on PCR. Both the numerical experiments and the PCR application illustrate that adversarial attacks degrade the performance of subspace learning significantly. In the future, it is of interest to investigate the defense strategy to mitigate the effects of this attack.


\appendices
\section{Poof of the equivalence between problem~\eqref{opt:low_rank_1} and problem~\eqref{opt:low_rank_3}} \label{app:eq12-eq13}
Before giving the proof, we first examine the unitary invariant property of the Asimov distance, which is helpful in our subsequent proof.
\begin{proposition}	
Let $\mathbf P$ and $\mathbf T$ be unitary matrices,  and then for the Asimov distance function $\theta(\cdot,\cdot)$, we have 
$$\theta\big(\mathbf X_1,g_k(\mathbf X_2)\big) 
= \theta\big(\mathbf P\mathbf X_1 \mathbf T^\top ,g_k(\mathbf P\mathbf X_2 \mathbf T^\top )\big).$$
\label{prop:p1}
\end{proposition}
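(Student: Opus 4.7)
The plan is to separate the unitary invariance into two independent pieces, one for the right factor $\mathbf{T}^\top$ and one for the left factor $\mathbf{P}$, and show each piece preserves both subspaces that enter the Asimov distance. The right-multiplication step is the easier of the two: since $\mathbf{T}\mathbf{T}^\top=\mathbf{I}$, I have $(\mathbf{X}_i\mathbf{T}^\top)(\mathbf{X}_i\mathbf{T}^\top)^\top=\mathbf{X}_i\mathbf{X}_i^\top$ for $i=1,2$, so the column span of $\mathbf{X}_1\mathbf{T}^\top$ equals that of $\mathbf{X}_1$, and because the leading $k$ principal components are determined by $\mathbf{X}_i\mathbf{X}_i^\top$, one has $g_k(\mathbf{X}_2\mathbf{T}^\top)=g_k(\mathbf{X}_2)$. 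This reduces the claim to the left-multiplication case.

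For the left-multiplication by $\mathbf{P}$, I would first check that if $\mathbf{U}_1$ is any orthonormal basis of $\mathrm{span}(\mathbf{X}_1)$, then $\mathbf{P}\mathbf{U}_1$ is an orthonormal basis of $\mathrm{span}(\mathbf{P}\mathbf{X}_1)$, because $(\mathbf{P}\mathbf{U}_1)^\top(\mathbf{P}\mathbf{U}_1)=\mathbf{U}_1^\top\mathbf{P}^\top\mathbf{P}\mathbf{U}_1=\mathbf{I}$. To handle PCA on $\mathbf{X}_2$, I would start from a full SVD $\mathbf{X}_2=\mathbf{U}\mathbf{\Sigma}\mathbf{V}^\top$ and observe that $\mathbf{P}\mathbf{X}_2=(\mathbf{P}\mathbf{U})\mathbf{\Sigma}\mathbf{V}^\top$ is still an SVD, since $\mathbf{P}\mathbf{U}$ has orthonormal columns and $\mathbf{\Sigma}$ is unchanged. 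Hence the leading $k$ left singular vectors of $\mathbf{P}\mathbf{X}_2$ are exactly $\mathbf{P}\, g_k(\mathbf{X}_2)$.

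Combining the two reductions, the matrix whose singular values determine the Asimov distance on the transformed side is
\[
(\mathbf{P}\mathbf{U}_1)^\top\bigl(\mathbf{P}\, g_k(\mathbf{X}_2)\bigr)=\mathbf{U}_1^\top\, g_k(\mathbf{X}_2),
\]
which coincides with the corresponding matrix on the original side. Since the Asimov distance is the arccosine of the smallest singular value of this inner-product matrix, the two distances are equal, which is what the proposition claims.

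The only subtle point I anticipate is the non-uniqueness of $g_k(\mathbf{X}_2)$ when $\sigma_k=\sigma_{k+1}$, in which case the leading $k$-dimensional principal subspace is not uniquely defined. This is not a real obstacle, however: any admissible choice of $g_k(\mathbf{X}_2)$ yields, via left multiplication by $\mathbf{P}$, an admissible choice of $g_k(\mathbf{P}\mathbf{X}_2\mathbf{T}^\top)$, and the identity above shows that the Asimov distance is independent of which admissible choice is used. So the argument goes through verbatim.
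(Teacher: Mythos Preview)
Your proof is correct and follows essentially the same approach as the paper's own proof. Both arguments rest on the two key observations that (i) left-multiplying by a unitary $\mathbf{P}$ sends orthonormal bases to orthonormal bases and sends an SVD $\mathbf{X}_2=\mathbf{U}\mathbf{\Sigma}\mathbf{V}^\top$ to the SVD $(\mathbf{P}\mathbf{U})\mathbf{\Sigma}\mathbf{V}^\top$, so $g_k(\mathbf{P}\mathbf{X}_2)=\mathbf{P}\,g_k(\mathbf{X}_2)$, and (ii) right-multiplication by a unitary does not change column spans or left singular vectors; the paper packages these using the thin QR factorization while you use a generic orthonormal basis and the identity $(\mathbf{X}\mathbf{T}^\top)(\mathbf{X}\mathbf{T}^\top)^\top=\mathbf{X}\mathbf{X}^\top$, but the substance is the same.
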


\begin{proof}
First, we show $\theta(\mathbf X_1,\mathbf X_2) = \theta(\mathbf P\mathbf X_1\mathbf T^\top,\mathbf P\mathbf X_2\mathbf T^\top)$. Suppose the thin QR decompositions of $\mathbf X_1$ and $\mathbf X_2$ are 
$\mathbf X_1 = \mathbf Q_1 \mathbf R_1,\quad \mathbf X_2 = \mathbf Q_2\mathbf R_2$, and then the subspace distance between the two subspaces spanned by the columns of $\mathbf X_1$ and $\mathbf X_2$ is determined by the singular values of $\mathbf Q_1^\top \mathbf Q_2$. Since $(\mathbf P \mathbf Q_1)^\top (\mathbf P \mathbf Q_2) = \mathbf Q_1^\top \mathbf Q_2$ and right multiplying an unitary matrix does not change the singular values and the column subspace of a matrix, we have $\theta(\mathbf X_1,\mathbf X_2) = \theta(\mathbf P\mathbf X_1\mathbf T^\top,\mathbf P\mathbf X_2\mathbf T^\top)$. 

Second, suppose the full SVD of $\mathbf X_2$ is $\mathbf X_2 = \mathbf U_2\mathbf \Sigma_2 \mathbf V_2^\top $, where $\mathbf U_2 = [\mathbf u_{21},\mathbf u_{22},\cdots, \mathbf u_{2d}]$. Then 
$$\mathbf P g_k(\mathbf X_2) = \mathbf P[\mathbf u_{21},\mathbf u_{22},\cdots,\mathbf u_{2k}] = g_k(\mathbf P\mathbf X_2),$$
which can be verified by checking that $\mathbf P\mathbf U_2\mathbf \Sigma_2 \mathbf V_2^\top $ is a valid SVD of $\mathbf P\mathbf X_2$. It completes the proof. 
\end{proof}
With the help of this proposition, let $\mathbf P = \mathbf U^\top $, $\mathbf T=\mathbf V^\top$, right multiply $\mathbf{P}$ and left multiply $\mathbf{T}^\top$ on both $\mathbf{X}$ and $\mathbf{\hat{X}}$, and we can simplify problem~\eqref{opt:low_rank_1} as the following
\begin{align}
    \max_{\mathbf a\in \mathbb R^{d},\mathbf b\in \mathbb R^{n}}:\quad & \theta(\mathbf \Sigma, g_k(\tilde{\mathbf Y}))     \label{opt:low_rank_2}\\\nonumber 
    \text{s.t.}\quad & \tilde{\mathbf Y} = \mathbf \Sigma + \mathbf a\mathbf b^\top, \\ \nonumber
    &\|\mathbf a\|\|\mathbf b\| \le \eta,  
\end{align}
where we assume $n>d$, $\mathbf{\Sigma}=[\textrm{diag}(\sigma_1,\sigma_2,\cdots,\sigma_k,\mathbf{0}),\mathbf{0}]\in \mathbb{R}^{d\times n}$. 
Also, from problem~\eqref{opt:low_rank_1} to problem~\eqref{opt:low_rank_2}, we do variable change $\mathbf a \Leftarrow \mathbf{U}^\top \mathbf a, \mathbf b \Leftarrow \mathbf{V}^\top \mathbf{b}$. 

To further simplify this optimization problem, we split $\mathbf a$ and $\mathbf b$ into $\mathbf a=[\mathbf a_1^\top,\mathbf a_2^\top]^\top, \mathbf b=[\mathbf b_1^\top, \mathbf b_2^\top]^\top$, where $\mathbf a_1\in \mathbb{R}^{k}$, $\mathbf a_2 \in \mathbb{R}^{d-k}$, $\mathbf b_1\in \mathbb{R}^{k}$, and $\mathbf b_2\in \mathbb{R}^{n-k}$. In addition, utilizing the Householder transformation\cite{horn2012matrix}, we construct an orthogonal matrix 
\begin{equation}
\mathbf M_1 = 
\begin{bmatrix}
\mathbf I_k & \mathbf 0 \\
\mathbf 0 & \mathbf H_1
\end{bmatrix},
\end{equation}
where 
\begin{align*}
    &\mathbf M_1^\top \mathbf M_1 = \mathbf I,\,
    &&\mathbf H_1 = \mathbf I - 2\frac{\mathbf u\mathbf u^\top}{\|\mathbf u\|^2},\\
    &\mathbf u = \mathbf a_2 - s_1\|\mathbf a_2\| \cdot \mathbf e_1,
    &&\mathbf e_1 =[1,0,\cdots,0]^\top \in \mathbb{R}^{d-k},\\
    & \mathbf{H}_1^\top \mathbf{a}_2 = s_1\|\mathbf{a}_2\|\cdot\mathbf{e}_1, 
    && s_1 = \pm 1.
\end{align*}
Similarly, we can construct another Householder transformation matrix $\mathbf{H}_2$ for $\mathbf{b}_2$ and the corresponding orthogonal matrix $\mathbf{M}_2 = \text{diag}(\mathbf{I}_k, \mathbf{H}_2)$. Left multiplying $\mathbf{M}_1^\top$ and right multiplying $\mathbf{M}_2$ on $\tilde{\mathbf{Y}}$, we have 
\begin{align*}
    \mathbf{M}_1^\top \tilde{\mathbf{Y}}\mathbf{M}_2 = \begin{bmatrix}
    \tilde{\mathbf{\Sigma}} & \mathbf{0} \\
    \mathbf{0}     & \mathbf{0}
    \end{bmatrix} 
    + \begin{bmatrix}
    \mathbf{a}_1 \\
    s_1 \|\mathbf{a}_2\| \\
    \mathbf{0}
    \end{bmatrix}
    \begin{bmatrix}
    \mathbf{b}_1^\top & s_2\|\mathbf{b}_2\| &  \mathbf{0}
    \end{bmatrix},
\end{align*}
where $s_2 = \pm 1$. 
 
Let $\mathbf{a}\Leftarrow [\mathbf{a}_1^\top,\,  s_1\|\mathbf{a}_2\|]^\top$ and $\mathbf{b}\Leftarrow [\mathbf{b}_1^\top,\, s_2\|\mathbf{b}_2\|]^\top$. Utilizing Proposition \ref{prop:p1}, it is clear that problem~\eqref{opt:low_rank_3} and problem~\eqref{opt:low_rank_2} are equivalent. 

\section{Proof of Theorem~\ref{thm:ab}}\label{app:thm1}
The proof follows similar steps to those in~\cite{pimentel2017adversarial}. In problem~\eqref{opt:low_rank_3}, $\tilde{\mathbf \Sigma}$ is a diagonal matrix with diagonal elements $\{\sigma_1,\sigma_2,\cdots,\sigma_k,0\}$. The subspace spanned by $g_k(\mathbf Y)$ is a $k$ dimensional subspace in $\mathbb R^{k+1}$. We denote this subspace as $\mathbb{Q}$, denote $\mathbb{P}$ as the subspace spanned by $\tilde{\mathbf \Sigma}$ and further denote their intersection as $\mathbb{T} = \mathbb{P} \cap \mathbb{Q}$.
Note that $\mathbb P$ is not equal to $\mathbb Q$ (otherwise the Asimov distance will be zero), so we have $\textrm{dim}(\mathbb P \cup \mathbb Q)=k+1$.
Since $\textrm{dim}(\mathbb{P})+\textrm{dim}(\mathbb{Q})-\textrm{dim}(\mathbb{T}) =\textrm{dim}(\mathbb P \cup \mathbb Q), $
we have $\textrm{dim}(\mathbb{T}) = k-1$. Let $\mathbf T$ be an orthonormal basis of $\mathbb T$. Let $[\mathbf T, \mathbf p]$ be an orthonormal basis of $\mathbb{P}$ and let $[\mathbf T,\mathbf q]$ be an orthonormal basis of $\mathbb{Q}$. By the definition of Asimov distance, the subspace distance between $\mathbb{P}$ and $\mathbb{Q}$ is the angle between $\mathbf p$ and $\mathbf q$. 

Firstly, it is easy to see that $a_{k+1}\neq 0$. Otherwise, $\mathbb{Q}$ will be equal to $\mathbb{P}$, which means that their Asimov distance is zero. 

Secondly, it is easy to see $\mathbf q \in \textrm{span}[\mathbf T, \mathbf p,\mathbf e_{k+1}]$, where $\mathbf e_{k+1}$ is an ordinary basis vector that only has element $1$ in the ($k+1$)th coordinate.  Since $\mathbf T$ is orthogonal to $\mathbf q$, we have $\mathbf q \in \textrm{span}[\mathbf p, \mathbf e_{k+1}]$. It is easy to see that the larger variance in the direction of $\mathbf p$ is, the closer $\mathbf p$ and $\mathbf q$ will be. Then we should select $\mathbf p$ as the direction with the smallest variance in $\mathbf X$. Since we are assuming that $\sigma_1 \ge \sigma_2 \ge \cdots \ge \sigma_k$, $\mathbf p$ should be $\mathbf e_{k}$. 

Thirdly, for a fixed direction of $\mathbf a$, let $\hat{\mathbf a}$ be the projection of $\mathbf a$ onto $\text{span}[\mathbf e_k,\mathbf e_{k+1}]$. Clearly, $\mathbf q$ will be closer to $\mathbf{a}$ as $\hat{\mathbf a}$ grows. As a result, the angle between $\mathbf q$ and $\mathbf p$ will be larger. This also implies that the length of $\mathbf a$ should be maximized: $\|\mathbf a\| = \eta$. Hence, the Asimov distance is maximized when $\mathbf a = \hat{\mathbf{a}}$ and $\|\mathbf a\|=\eta$, implying that $\mathbf a$ only has nonzero elements in its $k$th and $k+1$th coordinates. 

Finally, for a fixed $\mathbf a$ in the form of \eqref{opt:ab}, the projected variance of $\mathbf Y$ on the direction of $\mathbf e_k$ is $v_1 = \sum_{i\neq k} (a_k b_i)^2  + (a_k b_k+\sigma_k)^2 = a_k^2+\sigma_k^2+2a_k b_k\sigma_k$ and the projected variance of $\mathbf Y$ on the direction of $\mathbf e_{k+1}$ is $v_2 = \sum_i (a_{k+1}b_i)^2 = a_{k+1}^2$. To maximize the Asimov distance, we need to make $v_1$ small and $v_2$ large. Apparently, for fixed $\mathbf a$, $v_1$ is minimized when $b_k = -\text{sign}(a_k)$, which implies $b_i=0, \forall i\neq k$. To avoid the sign ambiguity, we set $b_k=1$. 

\section{Proof of Theorem~\ref{theorem:rank-one-optimality}}\label{app:thm2}
This proof follows the similar steps in the proof of the low-rank case. 
Denote $\mathbb{P}$ as the subspace spanned by $g_k(\mathbf \Sigma)$ and $\mathbb{Q}$ as the subspace spanned by $g_k(\mathbf{Y})$, and denote their intersection as $\mathbb{T} = \mathbb{P}\cap \mathbb{Q}$. 
We further denote $\mathbf{T}$ as an orthonormal basis of $\mathbb{T}$, $[\mathbf{T},\mathbf{p}]$ as an orthonormal basis of $\mathbb{P}$, and $[\mathbf{T},\mathbf{q}]$ as an orthonormal basis of $\mathbb{Q}$.
From the definition of Asimov distance, the subspace distance between $\mathbb{P}$ and $\mathbb{Q}$ is the subspace distance between the span of $\mathbf{p}$ and the span of $\mathbf{q}$. 

First, it is apparent that $\mathbf{q}\in \text{span}[\mathbf{T},\mathbf{p},\mathbf{e}_{k+1},\mathbf{e}_{k+2},\cdots,\mathbf{e}_{d}]$. Since $\mathbf{q}\perp\mathbf{T}$, we have $\mathbf{q}\in \text{span}[\mathbf p,\mathbf e]$, where $\mathbf e \in  \text{span}[\mathbf{e}_{k+1},\cdots,\mathbf{e}_d]$. It is easy to see that the subspace distance between the span of $\mathbf q$ and the span of $\mathbf p$ will be large if the variance of $\mathbf{\Sigma}$ in the span of $\mathbf{p}$ is large and the variance of $\mathbf{\Sigma}$ in the span of $\mathbf{q}$ is small. So we should select $\mathbf p$ as the direction in $\text{span}[\mathbf{e}_1,\cdots,\mathbf{e}_k]$ that has the smallest variance of $\mathbf{\Sigma}$ and select $\mathbf{e}$ as the direction among $\text{span}[\mathbf{e}_{k+1},\cdots,\mathbf{e}_n]$ that has the largest variance of $\mathbf{\Sigma}$. Since $\mathbf{e}\in \text{span}[\mathbf{e}_1,\mathbf{e}_2,\cdots,\mathbf{e}_d]$ and $\sigma_1 \ge \sigma_2\ge\cdots\ge \sigma_k\ge\sigma_{k+1}\ge\cdots\ge\sigma_d$, $\mathbf{p}$ should be $\mathbf{e}_k$ and $\mathbf{e}$ should be $\mathbf{e}_{k+1}$. So, we have $\mathbf{q} \in \text{span}[\mathbf{e}_k,\mathbf{e}_{k+1}]$. 

Second, for a fixed direction of $\mathbf a$, let $\Hat{\mathbf{a}}$ be the projection of $\mathbf a$ onto $\text{span}[\mathbf{e}_k,\mathbf{e}_{k+1}]$. It is easy to see that $\mathbf{q}$ will be closer to $\mathbf{a}$ as $\Hat{\mathbf{a}}$ grows, and as a result, the angle between $\mathbf{q}$ and $\mathbf{p}$ will be larger. This implies the length of $\mathbf{a}$ should be maximized, which indicates $\|\mathbf{a}\|=\eta$ and the distance is maximized when $\mathbf{a}=\Hat{\mathbf{a}}$. It also indicates $a_i=0$ if $i\neq k,\,k+1$.  

Finally, for a fixed $\mathbf a$ in the form of~\eqref{opt:a}, the projected variance of $\mathbf{Y}$ in the direction of $\mathbf{e}_k$ is 
$v_k = \sum_{i\neq k}(a_kb_i)^2+(a_kb_k+\sigma_k)^2 = a_k^2+\sigma_k^2+2a_kb_k\sigma_k$ and the projected variance of $\mathbf{Y}$ in the direction of $\mathbf{e}_{k+1}$ is $v_{k+1} = \sum_{i\neq k+1}(a_{k+1}b_i)^2+(\sigma_{k+1}+a_{k+1}b_{k+1})^2 = a_{k+1}^2+\sigma_{k+1}^2+2a_{k+1}b_{k+1}$. To maximize the Asimov distance, we should make $v_k$ small and make $v_{k+1}$ large. With the constraint that $\|\mathbf{b}\|=1$, we should have $b_k^2+b_{k+1}^2 =1$, which implies $b_i=0$ for all $i\neq k$ and $i\neq (k+1)$. 

As shown above, the optimal $\mathbf a$ and $\mathbf{b}$ should be in the form of~\eqref{opt:a} and~\eqref{opt:b}, which completes our proof.

\section{Proof of Theorem~\ref{thm:sol-k<r}}\label{app:sol-kkt1}
The optimal solution to problem~\eqref{opt:theta2} either locates at the boundary, or at the stationary points.

We first characterize the stationary points. At the stationary points, the value $(\alpha^*,\beta^*)$ satisfies the necessary conditions
\begin{align}
    \begin{cases}
    &\frac{\partial}{\partial \alpha}|\cos \varphi(\alpha,\beta)|_{\alpha=\alpha^*,\beta=\beta^*} 
    =0, \\
    &\frac{\partial}{\partial \beta}| \cos \varphi(\alpha,\beta)|_{\alpha=\alpha^*,\beta=\beta^*} 
    =0.
    \end{cases}
    \label{eq:kkt0}
\end{align}
Since $\sin \varphi^* \neq 0$, we have
\begin{align}
    \begin{cases}
    \frac{\partial}{\partial \alpha}\varphi(\alpha,\beta)|_{\alpha=\alpha^*,\beta=\beta^*} 
    &=0, \\
    \frac{\partial}{\partial \beta}\varphi(\alpha,\beta)|_{\alpha=\alpha^*,\beta=\beta^*} 
    &=0,
    \end{cases}
    \label{eq:kkt1}
\end{align}
in which
\begin{align*}
\frac{\partial \varphi}{\partial \alpha}  =& \frac{\eta}{a_x^2+a_y^2}
\Big( \eta(3\sigma_{k+1}^2-\sigma_k^2+\eta^2)\\
&+2\eta(\sigma_k^2-\sigma_{k+1}^2)\cos^2(\alpha)+2\eta(\sigma_k^2-\sigma_{k+1}^2)\cos^2(\beta) \\
&+ \sigma_k(\sigma_k^2-\sigma_{k+1}^2+3\eta^2)\cos(\alpha)\cos(\beta) \\
&+ \sigma_{k+1}(\sigma_{k+1}^2-\sigma_k^2+3\eta^2)\sin(\alpha)\sin(\beta) \Big), \\
\frac{\partial \varphi}{\partial \beta}=& \frac{\eta}{a_x^2+a_y^2}
\Big(
\sigma_k(\sigma_{k+1}^2+\eta^2-\sigma_k^2)\sin(\alpha)\sin(\beta)\\
&+\sigma_{k+1}(\sigma_k^2+\eta^2-\sigma_{k+1}^2)\cos(\alpha)\cos(\beta)+2\eta\sigma_k\sigma_{k+1}
\Big).
\end{align*}

Eliminating $\sin(\alpha)\sin(\beta)$ from~\eqref{eq:kkt1}, we have
\begin{align}
C\cos^2(\alpha)+D\cos(\alpha)\cos(\beta)+C\cos^2(\beta)+F = 0,
\label{eq:CDE}
\end{align}
where 
$
C= 2\eta\sigma_k(\sigma_k^2-\sigma_{k+1}^2)(\sigma_{k+1}^2+\eta^2-\sigma_k^2 )$, $
D= (\sigma_k^2-\sigma_{k+1}^2)\left(-(\sigma_k^2-\sigma_{k+1}^2)^2-2\eta^2(\sigma_k^2+\sigma_{k+1}^2)+3\eta^4\right) $ and
$F= \eta\sigma_k\left(\sigma_k^4+\sigma_{k+1}^4+\eta^4-2\sigma_k^2\sigma_{k+1}^2-2\sigma_k^2\eta^2-2\sigma_{k+1}^2\eta^2\right).
$

Further, we rewrite the first equation of \eqref{eq:kkt1} as
\begin{align}
    c\sqrt{(1-\cos^2(\alpha))(1-\cos^2(\beta))}+d\cos(\alpha)\cos(\beta)+e =0,
    \label{eq:cde}
\end{align}
where 
$ c = \sigma_k(\sigma_{k+1}^2+\eta^2-\sigma_k^2)$, $
    d =\sigma_{k+1}(\sigma_k^2+\eta^2-\sigma_{k+1}^2),$ and $
    e =2\eta \sigma_k\sigma_{k+1}.$
    
Combining \eqref{eq:CDE} and \eqref{eq:cde} and eliminating $\cos^2(\alpha)$ and $\cos^2(\beta)$, we have
\begin{align*}
    &(c^2-d^2)\cos(\alpha)^2\cos(\beta)^2\\
    &+\left(\frac{Dc^2}{C}-2de\right)\cos(\alpha)\cos(\beta)+\frac{c^2F}{C}+c^2-e^2=0.
\end{align*}
The left side of the equation is a quadratic function with respect to $r=\cos(\alpha)\cos(\beta)$. The two roots are:
\begin{align*}
    r_1 = -\frac{\sigma_k\eta}{\sigma_k^2-\sigma_{k+1}^2}, 
    r_2 =-\frac{\sigma_k}{2}\left(\frac{1}{\eta}+\frac{\eta}{\sigma_k^2-\sigma_{k+1}^2}\right).
\end{align*}
Note that $\eta \in[0,\sigma_k-\sigma_{k+1})$, so we have $r_1 \in (-\frac{\sigma_k}{\sigma_k+\sigma_{k+1}},0]$, $r_2 \in(-\infty,-\frac{\sigma_k}{\sigma_k-\sigma_{k+1}})$. Since $|\cos(\alpha)\cos(\beta)|\le 1$, $\frac{\sigma_k}{\sigma_k+\sigma_{k+1}}<1$, and $\frac{\sigma_k}{\sigma_k-\sigma_{k+1}}>1$, we should only retain the first root $r_1$. Substitute $\cos(\alpha)\cos(\beta)=r_1=-\frac{\eta\sigma_k}{\sigma_k^2-\sigma_{k+1}^2}$ into~\eqref{eq:CDE}, and we have 
 $   C\cos^4(\alpha)+(Dr_1+F)\cos^2(\alpha)+Cr_1^2=0.$
The left side of the equation is a quadratic function with respect to $s=\cos^2(\alpha)$, so we can easily find its roots. Let us denote $s_1$ and $s_2$ as the two roots: 
\begin{align*}
    s_1 = \frac{\sigma_k^2-\sigma_{k+1}^2+\eta^2-\sqrt{H}}{2(\sigma_k^2-\sigma_{k+1}^2)},     s_2 =\frac{\sigma_k^2-\sigma_{k+1}^2+\eta^2+\sqrt{H}}{2(\sigma_k^2-\sigma_{k+1}^2)},
\end{align*}
where $H = \sigma_k^4+\sigma_{k+1}^4+\eta^4-2\sigma_k^2\sigma_{k+1}^2-2\sigma_k^2\eta^2-2\sigma_{k+1}^2\eta^2$. We need to check that $H$ is positive. Viewing $H$ as a function of $\eta$ and taking derivative, we have
$H^\prime(\eta) =2\eta(2\eta^2-2(\sigma_k^2+\sigma_{k+1}^2))<0$. Since $\eta^2\in[0,(\sigma_k-\sigma_{k+1})^2)$, we have 
$H(\eta) \in (0, (\sigma_k^2-\sigma_{k+1}^2)^2]$.

As $\cos(\alpha)^2\le1$, we need to check whether $s_1,s_2\in[0,1]$.

Firstly, as $H$ is a decreasing function of $\eta$ in the considered range, $s_1$ is a increasing function of $\eta$. Therefore, we have $\min(s_1) = s_1(\eta)|_{\eta=0} = 0$ and $\max(s_1) = s_1(\eta)|_{\eta=\sigma_k-\sigma_{k+1}} = \frac{\sigma_k}{\sigma_k+\sigma_{k+1}}<1$. Hence, $s_1$ is a valid solution. 

Secondly, it is easy to check that $s_2$ is a decreasing function of $\eta$. 
So, we have 
$\max(s_2) = s_2(\eta)|_{\eta=0}=1$ and $\min(s_2)=s_2(\eta)|_{\eta = \sigma_k-\sigma_{k+1}} = \frac{\sigma_k}{\sigma_k+\sigma_{k+1}}<1$, which means $s_2$ is also a valid solution. 
Hence, we have two stationary points 
\begin{align}
    \begin{cases}
    \cos^2(\alpha) &=\frac{\sigma_k^2-\sigma_{k+1}^2+\eta^2 \pm \sqrt{H}}{2(\sigma_k^2-\sigma_{k+1}^2)}, \\
    \cos^2(\beta)  &= \frac{\sigma_k^2-\sigma_{k+1}^2+\eta^2\mp \sqrt{H}}{2(\sigma_k^2-\sigma_{k+1}^2)}.
    \end{cases}\label{opt:cos}
\end{align}



Since there are two sets of solutions in~\eqref{opt:cos}, we should determine which one is better. The variance of $\mathbf{Y}$ in the direction of $\mathbf{e}_k$ is $v_k =\cos^2(\alpha)+\sigma_k^2+2\cos(\alpha)\cos(\beta)$ and the variance of $\mathbf{Y}$ in the direction of $\mathbf{e}_{k+1}$ is $v_{k+1} = \sin^2(\alpha)+\sigma_{k+1}^2+2\sin(\alpha)\sin(\beta)$. Both of the two sets of solutions in~\eqref{opt:cos} lead to $\cos(\alpha)\cos(\beta)=-\frac{\eta\sigma_k}{\sigma_k^2-\sigma_{k+1}^2}$ and $\sin(\alpha)\sin(\beta) = \frac{\eta\sigma_{k+1}}{\sigma_k^2-\sigma_{k+1}^2}$. For fixed $\cos(\alpha)\cos(\beta)$ and $\sin(\alpha)\sin(\beta)$, the smaller $\cos^2(\alpha)$ is, the smaller $v_k$ will be, and the larger the subspace distance will be. Hence, we conclude the stationary point that satisfies
\begin{align}
\begin{cases}
\cos^2(\alpha^*) &=\frac{\sigma_k^2-\sigma_{k+1}^2+\eta^2 - \sqrt{H}}{2(\sigma_k^2-\sigma_{k+1}^2)} \\
\cos^2(\beta^*)  &= \frac{\sigma_k^2-\sigma_{k+1}^2+\eta^2+ \sqrt{H}}{2(\sigma_k^2-\sigma_{k+1}^2)} 
\end{cases}
\label{sol:stat-point}
\end{align}
leads to a larger subspace distance. 

Finally, it is easy to compute the objective values of problem~\eqref{opt:theta2} at the boundary points. Comparing these values with the objective values induced by the point in equation~\eqref{sol:stat-point}, we can readily conclude the point in equation~\eqref{sol:stat-point} gives a larger objective value. In summary, given that $\alpha \in [0,\pi/2]$ and $\beta\in[\pi/2,\pi]$, the optimal $\alpha$ and $\beta$ are shown in~\eqref{opt:albe}. 

\section{Proof of theorem~\ref{theorem:no-rank-optimality}}\label{app:thm3}
The proof has two main steps. In the first step, we show that non-zero entries of $\mathbf{B}$ are in the $k$th and ($k+1$)th rows. In the second step, we will further prove the entries except in the $k$th and ($k+1$)th columns should be zero. 

In the first step, we follow similar proof procedures in Theorem~\ref{theorem:rank-one-optimality}. We use $\mathbb{P}$ to denote the subspace spanned by $g_k(\mathbf{\Sigma})$ and $\mathbb{Q}$ to denote the subspace spanned by $g_k(\mathbf{Y})$. We also use $\mathbb{T}$ to represent the intersection of the two subspaces and further denote $\mathbf{T}$ as one set of orthonormal bases of $\mathbb{T}$, $[\mathbf{T},\mathbf{p}]$ as one set of orthonormal bases of $\mathbb{P}$ and $[\mathbf{T},\mathbf{q}]$ as one set of orthonormal bases of $\mathbb{Q}$. So, the subspace distance between $\mathbb{P}$ and $\mathbb{Q}$ is the subspace distance between the subspace spanned by $\mathbf{p}$ and that spanned by $\mathbf{q}$. Following the same arguments in Theorem~\ref{theorem:rank-one-optimality}, by setting all the entries of $\mathbf{B}$ to be zero except the $k$th and ($k+1$)th rows, we can guarantee achieving the maximal subspace distance and further we have $\mathbf{q} \in \text{span}[\mathbf{e}_k, \mathbf{e}_{k+1}]$ and $\mathbf{p}=\mathbf{e}_k$. 

In the second step, since the non-zero elements of $\mathbf{B}$ only locate in the $k$th and ($k+1$)th rows and $\mathbf{q}\in\text{span}[\mathbf{e}_k,\mathbf{e}_{k+1}]$, it indicates $\mathbf{q}$ is the direction with the maximal variance on the span of $\mathbf{e}_k$ and $\mathbf{e}_{k+1}$. Assuming $\mathbf{q}=[0,\cdots,\cos(\gamma),\sin(\gamma),\cdots,0]^\top$ with $\cos(\gamma)$ and $\sin(\gamma)$ being in the $k$th and ($k+1$)th coordinates respectively and according to the definition of principal components, we can find $\gamma$ by solving the optimization problem
\begin{align}
    \underset{\gamma}{\mathrm{argmax}}:\quad \mathbf{q}^\top \mathbf{Y}\mathbf{Y}^\top \mathbf{q}. 
\end{align}
Plug $\mathbf{q}=[0,\cdots,\cos(\gamma),\sin(\gamma),\cdots,0]^\top$ into the objective function, and we have 
\begin{align}
    &\quad \mathbf{q}^\top \, \mathbf{Y}\mathbf{Y}^\top \, \mathbf{q}
    =&
    \begin{bmatrix}
    \cos(\gamma)\\
    \sin(\gamma) 
    \end{bmatrix}^\top 
    \begin{bmatrix}
    b_{x1} & 
    \frac{1}{2}b_y\\
    \frac{1}{2}b_y
    & b_{x2}
    \end{bmatrix}
    \begin{bmatrix}
    \cos(\gamma)\\
    \sin(\gamma) 
    \end{bmatrix},
    \label{opt:obj-complex}
\end{align}
where 
$    b_{x1} =  \|\mathbf{b}_k+\mathbf{e}_k\sigma_k\|^2$, $    b_{x2} =  \|\mathbf{b}_{k+1}+\mathbf{e}_{k+1}\sigma_{k+1}\|^2$, $
     b_y   = 2(\mathbf{b}_k+\mathbf{e}_k\sigma_k)^\top (\mathbf{b}_{k+1}+\mathbf{e}_{k+1}\sigma_{k+1}),
$
with $\mathbf{b}_k$ and $\mathbf{b}_{k+1}$ being the transpose of the $k$th and ($k+1$)th rows of $\mathbf{B}$ respectively and $\mathbf{e}_k \in \mathbb{R}^n$, $\mathbf{e}_{k+1} \in \mathbb{R}^n$ being the standard bases.

We can solve~\eqref{opt:obj-complex} by computing the first principal component of the middle matrix of the right hand of~\eqref{opt:obj-complex}. Using the result from equation~\eqref{eq:varphi}, we have
    $\gamma =0.5 \text{atan2}(b_y,b_x),$ 
where 
 $   b_x = b_{x1} -b_{x2}.$ 
Since the subspace distance is the distance between $\mathbf{q}$ and $\mathbf{e}_k$, it is apparent that the subspace distance is $|\gamma|$. To maximize $|\gamma|$, we first determine the sign of $b_y$ or $b_x$. We have 
\begin{align}\nonumber
&\frac{b_x}{\|\mathbf{b}_k+\mathbf{e}_k\sigma_k\| + \| \mathbf{b}_{k+1} + \mathbf{e}_{k+1}\sigma_{k+1}\|} \\ \nonumber
&=
\|\mathbf{b}_k+\mathbf{e}_k\sigma_k\| -\| \mathbf{b}_{k+1} + \mathbf{e}_{k+1}\sigma_{k+1}\| \\\nonumber
&\ge \sigma_k-\|\mathbf{b}_k\| - \sigma_{k+1} - \|\mathbf{b}_{k+1}\| \\
&\ge \sigma_k - \sigma_{k+1} - \sqrt{2}\eta \label{ineq:bk} \\
&>0,\label{ineq:eta}
\end{align}
where inequality~\eqref{ineq:bk} is the result of the energy constraint that $\eta \ge \|\mathbf{B}\|_{\textrm{F}} = \sqrt{\|\mathbf{b}_k\|^2+\|\mathbf{b}_{k+1}\|^2} \ge \frac{1}{\sqrt{2}}(\|\mathbf{b}_k\|+\|\mathbf{b}_{k+1}\|)$, and inequality~\eqref{ineq:eta} is due to the assumption that $\eta < \frac{\sigma_k - \sigma_{k+1}}{\sqrt{2}}$. In summary, $b_x$ is positive. 
Using the property of atan2 function, when $b_x > 0$, maximizing $|\gamma|$ is equivalent to maximizing $|b_y/b_x|$. Thus, we can formulate our problem as
\begin{align}
    \max_{\mathbf{b}_k,\mathbf{b}_{k+1}}:\quad &|b_y/b_x| 
    \label{opt:no-rank-p-theta}\\\nonumber
    \text{s.t.}\quad & \|[\mathbf{b}_k, \mathbf{b}_{k+1}] \|_{\text{F}} \le \eta.
\end{align}
In the objective function,
\begin{align*}
    b_y=&2 \big(\mathbf{b}_1^\top \mathbf{b}_2 + (b_{k,k}+\sigma_k)b_{k+1,k}+ b_{k,k+1}(b_{k+1,k+1}+\sigma_{k+1}) \big), \\ \nonumber
    b_x=& \|\mathbf{b}_1\|^2 - \|\mathbf{b}_2\|^2 + (b_{k,k}+\sigma_k)^2+b_{k,k+1}^2-b_{k+1,k}^2\\  &-(b_{k+1,k+1}+\sigma_{k+1})^2,
\end{align*}
where $\mathbf{b}_1=[b_{k,1},\,b_{k,2},\,\cdots,\,b_{k,k-1},\,b_{k,k+2},\,\cdots,\,b_{k,n}]^\top$ and \\
$\mathbf{b}_2=[b_{k+1,1},\,b_{k+1,2},\,\cdots,\,b_{k+1,k-1},\,b_{k+1,k+2},\,\cdots,\,b_{k+1,n}]^\top$ which are the vectors obtained by deleting the $k$th and ($k+1$)th elements of $\mathbf{b}_k$ and $\mathbf{b}_{k+1}$ respectively. We can change the sign of $b_y/b_x$ by changing the signs of $\mathbf{b}_1, b_{k+1,k}$, and $b_{k,k+1}$. Since both of the values $b_y/b_x$ and $-b_y/b_x$ are obtainable, we can remove the absolute value operation. Thus, our objective can be further simplified to maximize $b_y/b_x$. To complete the proof of Theorem~\ref{theorem:no-rank-optimality}, we should further demonstrate that when the optimality of our objective function is obtained, $\mathbf{b}_1$ and $\mathbf{b}_2$ should be vectors with all their entries being zero. To prove that, we examine the objective function further 
\begin{align}\nonumber 
    b_y \le    
    & 2 \big(\|\mathbf{b}_1\| \|\mathbf{b}_2\| + (b_{k,k}+\sigma_k)b_{k+1,k}\\
    &+ b_{k,k+1}(b_{k+1,k+1}+\sigma_{k+1})\big) \label{ineq:no-rank-by1}  \\\nonumber
    \le
    & 2 \big((b_{k,k}+\sigma_k)b_{k+1,k}\\
    & + \sqrt{b_{k,k+1}^2+\|\mathbf{b}_1\|^2}(\sqrt{b_{k+1,k+1}^2+\|\mathbf{b}_2\|^2}+\sigma_{k+1}) \big), \label{ineq:no-rank-by2} \\\nonumber
    b_x \ge 
    &(b_{k,k}+\sigma_k)^2+b_{k,k+1}^2+ \|\mathbf{b}_1\|^2-b_{k+1,k}^2 \\
    &- (\sqrt{b_{k+1,k+1}^2+\|\mathbf{b}_2\|^2}+\sigma_{k+1})^2. \label{ineq:no-rank-bx}
\end{align}
Inequality~\eqref{ineq:no-rank-by1} implies that the optimal value is determined by the norms of $\mathbf{b}_1$ and $\mathbf{b}_2$ instead of their specific values. Inequality~\eqref{ineq:no-rank-by2} is true as
\begin{align*}
&\sqrt{b_{k,k+1}^2+\|\mathbf{b}_1\|^2}(\sqrt{b_{k+1,k+1}^2+\|\mathbf{b}_2\|^2}+\sigma_{k+1}) \\
&= \sqrt{b_{k,k+1}^2+\|\mathbf{b}_1\|^2}\sqrt{b_{k+1,k+1}^2+\|\mathbf{b}_2\|^2}\\ 
&\quad+\sigma_{k+1} \sqrt{b_{k,k+1}^2+\|\mathbf{b}_1\|^2} \\
&\ge b_{k,k+1}b_{k+1,k+1}+\|\mathbf{b}_1\|\|\mathbf{b}_2\| + \sigma_{k+1}b_{k,k+1}\\
&= \|\mathbf{b}_1\|\|\mathbf{b}_2\| + b_{k,k+1}(b_{k+1,k+1}+\sigma_{k+1}). 
\end{align*}
Inequality~\eqref{ineq:no-rank-bx} is due to  $-(\sqrt{b_{k+1,k+1}^2+\|\mathbf{b}_2\|^2}+\sigma_{k+1})^2 \le -\|\mathbf{b_2}\|^2 - (b_{k+1,k+1}+\sigma_{k+1})^2$. 
The equalities in~\eqref{ineq:no-rank-by2} and \eqref{ineq:no-rank-bx} hold when $\|\mathbf{b}_1\|=0$ and $\|\mathbf{b}_2\|=0$. This means that, for any feasible solution $(\mathbf{b}_1, \mathbf{b}_2, b_{k,k}, b_{k,k+1}, b_{k+1,k}, b_{k+1,k+1})$ in \eqref{opt:no-rank-p-theta}, there is another corresponding feasible solution
$(\mathbf{0},\mathbf{0},b_{k,k},\sqrt{b_{k,k+1}^2+\|\mathbf{b}_1\|^2}, b_{k+1,k}, \sqrt{b_{k+1,k+1}^2+\|\mathbf{b}_2\|^2})$, which has a larger objective value. In conclusion, $\mathbf{b}_1$ and $\mathbf{b}_2$ should be zero vectors when the optimality of \eqref{opt:no-rank-p-theta} is obtained. This completes our proof.



\bibliographystyle{./format/IEEEtran}
\bibliography{./format/IEEEabrv,mybib}
\end{document}